\documentclass[journal,twoside,web]{ieeecolor}
\usepackage{generic}
\usepackage{cite}
\usepackage{amsmath,amssymb,amsfonts}
\usepackage{graphicx}
\usepackage{textcomp}
\usepackage{color,array}

\usepackage{blindtext}
\usepackage{mathrsfs}
\usepackage{amssymb,setspace}
\usepackage{amsmath,bm}
\usepackage{amsfonts}
\usepackage{epsfig}
\usepackage{enumerate}
\usepackage{graphicx}
\usepackage{subfigure}
\usepackage{arydshln}
\usepackage{lipsum}
\usepackage{float}
\usepackage{galois}
\usepackage{color, soul}
\usepackage{float}
\usepackage{algorithm}  
\usepackage{algorithmicx}  
\usepackage[noend]{algpseudocode}
\usepackage{romannum}
\usepackage{url}
\usepackage{hyperref}
\newtheorem{Thm}{Theorem}
\newtheorem{Asu}{Assumption}

\newtheorem{Def}{Definition}
\newtheorem{Lem}{Lemma}

\newtheorem{Rem}{Remark}

\setulcolor{red}
\setstcolor{green}
\sethlcolor{blue}

\hypersetup{hidelinks=true}
\usepackage{textcomp}
\def\BibTeX{{\rm B\kern-.05em{\sc i\kern-.025em b}\kern-.08em
    T\kern-.1667em\lower.7ex\hbox{E}\kern-.125emX}}
\begin{document}
\title{Robust control synthesis for uncertain linear systems with input saturation using mixed IQCs}
\author{Xu Zhang and Fen Wu
\thanks{Xu Zhang is with the School of
Electrical Engineering and Computer Science, The Pennsylvania State
	University, University Park, PA 16802. }% <-this % stops a space
\thanks{Fen Wu (Corresponding author) is with Department of Mechanical and Aerospace Engineering, North Carolina State University, Raleigh, NC, 27606 USA e-mail: fwu@ncsu.edu.}}

\maketitle

\begin{abstract}
This paper develops a robust control synthesis method for uncertain linear systems with input saturation in the framework of integral quadratic constraints (IQCs). The system is reformulated as a linear fractional representation (LFR) that captures both dead-zone nonlinearity and time-varying uncertainties. By combining mixed IQC-based dissipation inequalities with quadratic Lyapunov functions, sufficient conditions for robust stabilization are established. Compared with conventional approaches based on a single static sector condition for the dead-zone nonlinearity, the proposed method yields improved $\mathcal{L}_2$-gain performance through the use of scaled mixed IQCs. For systems subject to time-varying structured uncertainties, a new scaled bounded real lemma is further developed based on the IQC characterization. The resulting $\mathcal{H}_\infty$ synthesis conditions are expressed as linear matrix inequalities (LMIs), which are numerically tractable in all decision variables, including the scaling factors in the IQC multipliers. The proposed method is validated using a second-order uncertain system in linear fractional form, and its superiority over an anti-windup design is further illustrated by a cart-pendulum example.
\end{abstract}

\begin{IEEEkeywords}
Uncertain linear systems, input saturation, LFT, IQCs, Popov and Zames Falb multipliers, Scaled bounded real lemma.
\end{IEEEkeywords}

\section{Introduction}
\IEEEPARstart{C}{yber-physical} systems tightly integrate physical processes with sensing, communication, computation, and control, and have been widely deployed in a broad range of safety-critical and high-performance applications, such as robotics \cite{Siciliano2016}, unmanned aerial vehicles (UAVs) \cite{Zhao2018}, intelligent transportation systems \cite{Oladimeji2023}, and advanced manufacturing platforms including lithography equipment \cite{Zhang2012}. In these systems, the interaction between cyber components and physical dynamics is typically strong and bidirectional, so that uncertainty, disturbances, and physical limitations can have a direct impact on closed-loop behavior. As a consequence, the analysis and control of CPSs need explicitly account for both dynamical complexity and implementation constraints in order to ensure stability, performance, and reliability in practical operation.

Among the various nonideal factors arising in CPSs, input saturation is one of the most common and practically important nonlinear input constraints. It originates from the inherent physical limitations of actuators and can severely degrade system performance, enlarge transient responses, or even lead to instability; see, e.g., \cite{KA1994,PTG2010,TZ2001,DTA2009} and the references therein. For this reason, input saturation has attracted sustained attention in control engineering and related disciplines over the past decades. In particular, when input saturation coexists with uncertainty and external disturbances in linear systems, the feedback controller is expected not only to guarantee robust closed-loop stability, but also to provide satisfactory disturbance attenuation, which is commonly quantified in terms of the $\mathcal{L}_2$ gain \cite{HZT2004}.

\textit{Literature review.} Over the past decades, a variety of control design methods have been developed to address saturation problems in linear and nonlinear systems, as well as in stable and unstable plants \cite{YZT2002,Z1998,Fenwu2009}. First, anti-windup design has been widely recognized as an effective approach for compensating input saturation; see \cite{KTD1998, T1999, WL2004, GJIA2003, TAL2008} and the references therein. For example, \cite{LL2014} proposed a saturation-based switching anti-windup design to enlarge the domain of attraction of a linear system subject to nested saturation. In \cite{WPWZ2014}, both static and dynamic anti-windup problems were investigated for LFRs of rational systems with input saturation. However, anti-windup compensation is typically effective only when the control input remains within a relatively moderate range. Second, classical absolute-stability tools, such as the circle criterion, the Popov criterion, and LMI-based conditions, have also been employed to establish regional stability for linear systems with input constraints; see \cite{HB1998,K1999,KI2000}. Nevertheless, the Popov criterion is mainly used for stability analysis, while circle-criterion-based synthesis may be conservative.

% {\color{blue}{Third, various globally and semi-globally stabilizing feedback laws, including both state-feedback and output-feedback designs, have been developed in \cite{RJJ1997,HEY1994,FZQ2007}. Moreover, a gain-scheduling output-feedback saturation controller was proposed in \cite{HZT2004} to attenuate disturbances in addition to ensuring local stability. Based on norm-bounded differential inclusion models, two gain-scheduling output-feedback controllers were further proposed for linear time-invariant (LTI) and linear parameter-varying (LPV) systems with input saturation in \cite{XF2015,XF2016}, respectively, where regional stability for unstable plants was also achieved.}}

The IQC framework has emerged as a powerful tool for characterizing a broad class of uncertainties and nonlinearities, including time-invariant and time-varying parametric uncertainties, norm-bounded uncertainties, as well as sector-bounded and slope-restricted nonlinearities. This framework was originally developed by Megretski and Rantzer \cite{AA1997}. Early developments in IQC theory mainly focused on robust stability and performance analysis, and many important results have been established for linear time-invariant and time-varying systems. For instance, a necessary and sufficient condition for robust stability with dynamic IQCs was established in \cite{CI2008}. In \cite{JCH2015}, a systematic procedure for stability and performance analysis of a broad class of uncertain systems was developed based on dynamic IQCs, with the corresponding conditions formulated as LMIs. However, dynamic-IQC-based analysis usually requires a fixed basis function for the inner approximation of multipliers. If the basis dimension is chosen to be high, the resulting conditions become computationally demanding; if it is chosen to be low, additional conservatism may be introduced.

On the other hand, motivated by the strong descriptive capability of IQCs for both uncertainties and nonlinearities, increasing attention has been paid to IQC-based controller synthesis. In particular, Seiler \cite{P2015} developed a synthesis-oriented condition by combining an IQC-based time-domain dissipation inequality with a quadratic Lyapunov function, which has since become an effective tool for robust controller design. In this context, $J$-spectral factorization also plays an important role in synthesis and will be used in this paper. Subsequently, Wang \textit{et al.} \cite{WPS2016} proposed a robust synthesis algorithm for a class of uncertain LPV systems, where an LPV synthesis step and an IQC analysis step are carried out iteratively. Yuan and Wu proposed an exact-memory-delay design framework for a class of uncertain systems with state and input delays, and both state-feedback and output-feedback $\mathcal{H}_\infty$ controllers based on dissipation inequalities were studied in \cite{CF2016a,CF2016,CF2017}. In addition, Coutinho \cite{DM2002} presented a method for the optimal control of LTI systems with input saturation. However, robust synthesis results for uncertain LFT systems with input saturation remain limited, especially when dead-zone nonlinearities and structured time-varying uncertainties are to be treated simultaneously within a unified IQC framework while retaining computational tractability.

Moreover, recent work has highlighted the versatility of IQCs as an analysis tool for nonlinear and uncertain feedback interconnections. In particular, IQC-based formulations have been used to compute inner estimates of regions of attraction via SOS optimization~\cite{Iannellia, Iannellib} and to assess worst-case performance bounds, including settings with saturation and data-driven/local IQC descriptions~\cite{Pusch2022}. However, these developments are primarily analysis-oriented and do not directly yield tractable controller synthesis conditions for uncertain LFT systems with input saturation.

\textit{Contributions.} In the spirit of \cite{CF2016}, this paper studies robust $\mathcal{H}_\infty$ control synthesis for a class of uncertain systems with input saturation within the IQC framework. The uncertain plant is represented as an LFT model involving block-structured time-varying uncertainties and a dead-zone nonlinearity. A loop transformation is first introduced so that the Popov IQC can be incorporated in a proper form for state-feedback synthesis. Then, each element of the bounded time-varying uncertainty set is characterized by a static IQC, while the dead-zone nonlinearity is described by mixed IQCs, including the Popov multiplier, the Zames--Falb multiplier, and the sector-bound multiplier with scaling factors. Based on these descriptions, a new scaled bounded real lemma is derived for closed-loop robust stability and performance analysis. Although it is reminiscent of the scaled bounded real lemma in \cite{PP1995}, the present result is developed for synthesis and explicitly accommodates multiple IQCs for each time-varying uncertain element. By feeding the dead-zone output into the state-feedback law, robust closed-loop stability is guaranteed and an $\mathcal{L}_2$-gain performance from the disturbance input to the error output is achieved. The resulting synthesis conditions are expressed as a finite set of LMIs, which become convex after standard variable substitutions and relaxations. The main contributions of this paper are summarized as follows:
\begin{itemize}
	\item [1)] An $\mathcal{H}_\infty$ state-feedback control law based on mixed IQCs is developed for a class of uncertain LFT systems with input saturation.
	
	\item [2)] A new scaled bounded real lemma is established for controller synthesis, where both uncertainties and nonlinearities in the closed-loop system are characterized in a unified IQC framework.
	
	\item [3)] Compared with approaches based on a single IQC or a sector-bound condition for the dead-zone nonlinearity, the proposed method achieves improved $\mathcal{L}_2$-gain performance by employing dynamic IQCs together with tunable scaling factors.
\end{itemize}

\textit{Paper organization.} The remainder of this paper is organized as follows. Section \uppercase\expandafter{\romannumeral2} presents the system representation and the required IQC preliminaries. Section \uppercase\expandafter{\romannumeral3} develops the main results: it first introduces a loop transformation to obtain a physically realizable LFT interconnection, then presents the mixed-IQC characterization of the uncertainty and dead-zone nonlinearity, and finally derives an $\mathcal{H}_\infty$ state-feedback controller synthesis condition based on a new scaled bounded real lemma. Section \uppercase\expandafter{\romannumeral4} illustrates the proposed approach on a second-order uncertain LFT example and a cart--spring--pendulum example via simulations. Finally, Section \uppercase\expandafter{\romannumeral5} concludes the paper.

\textit{Notations.} Throughout the paper, $\mathbb{R}$ and $\mathbb{C}$ denote the sets of real and complex numbers, respectively, and $\mathbb{R}_+$ denotes the set of positive real numbers. $\mathbb{R}^{m\times n}$ ($\mathbb{C}^{m\times n}$) denotes the set of real (complex) $m\times n$ matrices, while $\mathbb{R}^{n}$ ($\mathbb{C}^{n}$) denotes the set of real (complex) $n$-dimensional vectors. The $n\times n$ identity matrix is denoted by $I_n$. We use $\mathbb{S}^{n}$ and $\mathbb{S}_+^{n}$ to denote the sets of real symmetric and positive definite $n\times n$ matrices, respectively. In LMIs, the symbol $\star$ denotes the terms induced by symmetry. For two integers $k_1<k_2$, let $\mathbf{I}[k_1,k_2]\triangleq\{k_1,k_1+1,\ldots,k_2\}$. For $s\in\mathbb{C}$, $\bar{s}$ denotes the complex conjugate of $s$. For a matrix $M\in\mathbb{C}^{m\times n}$, $M^{\rm T}$ denotes its transpose and $M^{*}$ denotes its conjugate transpose. Let $\mathbb{RL}_\infty$ denote the set of proper rational functions with real coefficients and no poles on the imaginary axis, and let $\mathbb{RH}_\infty$ denote the subset of $\mathbb{RL}_\infty$ consisting of functions that are analytic in the closed right-half complex plane. Similarly, $\mathbb{RL}_\infty^{m\times n}$ and $\mathbb{RH}_\infty^{m\times n}$ denote the sets of $m\times n$ transfer matrices whose entries belong to $\mathbb{RL}_\infty$ and $\mathbb{RH}_\infty$, respectively. For $G\in\mathbb{RL}_\infty^{m\times n}$, its para-Hermitian conjugate, denoted by $G^{\sim}$, is defined as $G^{\sim}(s)\triangleq G(-\bar{s})^{*}$. For $x\in\mathbb{C}^{n}$, the Euclidean norm is defined by $\|x\|\triangleq(x^{*}x)^{1/2}$. Finally, $L_{2+}^{n}$ denotes the space of functions $u:[0,\infty)\to\mathbb{R}^{n}$ satisfying
$\|u\|_2\triangleq\left(\int_0^\infty u^{\rm T}(t)u(t)\,dt\right)^{1/2}<\infty$.

\section{Problem Formulation and Preliminaries}
\subsection{IQC preliminaries}
Static and dynamic IQCs will be used to characterize the input-output behavior of the uncertainties and nonlinearities considered in this paper. Therefore, this section briefly reviews several preliminary results on IQCs.
\begin{Def}\cite{P2015}
	Let  $\Pi\in\mathbb{RL}_\infty^{(m_1+m_2)\times(m_1+m_2)}$ be a proper and rational function, called a multiplier, such that $\Pi=\Psi^{\sim}W\Psi$ with $W\in\mathbb{R}^{n_z\times n_z}$ and $\Psi\in\mathbb{RH}_\infty^{n_z\times(m_1+m_2)}$. Then two signals $v\in L_{2e+}^{m_1}$ and $w\in L_{2e+}^{m_2}$ satisfy the IQC defined by the multiplier $\Pi$, and $(\Psi, W)$ is a hard IQC factorization of $\Pi$ if for any bounded, causal operator $\mathcal{S}$ satisfying the IQC defined by $\Pi$, the following inequality holds:
	\begin{eqnarray}\label{eq:4}
	\int_0^{T}z^{\rm T}(t)Wz(t)dt\ge 0
	\end{eqnarray}
	where $z\in\mathbb{R}^{n_z}$ denotes the filtered output of $\Psi$ driven by inputs $(v, w)$ with zero initial conditions, i.e., $z = \Psi\left[\begin{array}{c}
	v \\
	w
	\end{array}\right]$, $w=\mathcal{S}(v)$, and all $T\ge 0$.
\end{Def}
To synthesize robust controller within IQC framework, the lemma of dissipation inequality is provided below.
\begin{Lem}\label{Lem:1} \cite{P2015}
	Consider a linear system $G\in\mathbb{RH}_\infty^{n_y\times n_u}$ and a bounded causal operator $\Delta\in\mathbb{R}^{n_q\times n_q}$. Assume that:
	\begin{enumerate}[1)]
		\item the interconnection of $G$ and $\Delta$ is well-posed.
		
		\item $\Delta$ satisfies the IQC defined by $\Pi$ and $(\Psi,W)$ is a hard factorization of $\Pi$.
		
		\item there exists $P\ge 0$ and a scalar $\gamma>0$ such that $V(x)=x^{\rm T}Px$ satisfies
		\begin{eqnarray}\label{eq:5}
		z^{\rm T}Wz+\dot{V}<\gamma d^{\rm T}d-\frac{1}{\gamma} e^{\rm T} e
		\end{eqnarray}
		for all nontrivial $(x,d,e)\in\mathbb{R}^{n_x+n_\psi}\times\mathbb{R}^{n_d}\times\mathbb{R}^{n_e}$.
	\end{enumerate}
	Then, the feedback interconnection of $G$ and $\Delta$ is stable.
\end{Lem}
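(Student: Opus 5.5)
Since the statement is Seiler's dissipation-inequality lemma, the plan is to integrate inequality \eqref{eq:5} along closed-loop trajectories and then use the hard IQC to discard the multiplier term. This gives both stability and an $\mathcal{L}_2$-gain bound of $\gamma$.

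\textbf{Extended system.} I first form the extended system: the plant $G$ is augmented with the filter $\Psi$, driven by the signals $(v,w)$ with $v$ the output of $G$ into $\Delta$ and $w=\Delta(v)$. The extended state is $x=(x_G,x_\psi)\in\mathbb{R}^{n_x+n_\psi}$, with inputs $(w,d)$ and outputs $(z,e)$, so that $z$, $e$ and $\dot x$ are linear in $(x,w,d)$. Condition 3) is read as a quadratic-form inequality in $(x,w,d)$, equivalently an LMI. Because it is strict on nontrivial arguments, I perturb it to extract a margin $\epsilon>0$:
\[
z^{\rm T}Wz+\dot V\le \gamma d^{\rm T}d-\tfrac{1}{\gamma}e^{\rm T}e-\epsilon\left(\|x\|^2+\|w\|^2+\|d\|^2\right).
\]
This uses compactness of the unit sphere in the finite-dimensional argument space.

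\textbf{Integration.} Take zero initial conditions, $d\in L_{2+}$ and any $T\ge0$; well-posedness (assumption 1) guarantees that all closed-loop signals lie in $L_{2e}$. Integrating the margin inequality over $[0,T]$ gives
\[
x(T)^{\rm T}Px(T)+\int_0^T z^{\rm T}Wz\,dt+\frac1\gamma\int_0^T\|e\|^2dt+\epsilon\int_0^T\left(\|x\|^2+\|w\|^2\right)dt\le\gamma\int_0^T\|d\|^2dt .
\]
The hard IQC (assumption 2, via \eqref{eq:4}) gives $\int_0^T z^{\rm T}Wz\ge0$, and $P\ge0$ gives $x(T)^{\rm T}Px(T)\ge0$. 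Both terms can therefore be dropped.

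\textbf{Conclusion.} What remains is a bound on $\|e\|_{2,T}^2$ and $\|x\|_{2,T}^2+\|w\|_{2,T}^2$ by a constant times $\|d\|_{2,T}^2$, uniformly in $T$. Letting $T\to\infty$ shows that these signals are in $L_2$, so the interconnection has finite gain: it is stable, with $\|e\|_2\le\gamma\|d\|_2$. To capture injected signals at the other loop points, I add disturbances entering the loop and repeat the argument, which is the standard treatment.

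\textbf{Main obstacle.} The delicate step is the sign of the storage function. Since $W$ is indefinite, $V$ may not dominate $\|x\|^2$, so Lyapunov arguments on $V$ alone fail. I avoid this by never needing $V$ to bound the state: the state bound comes from the $\epsilon\|x\|^2$ margin, while $x(T)^{\rm T}Px(T)\ge0$ only needs to be dropped. Nonzero initial conditions, if required, would add a constant $x(0)^{\rm T}Px(0)$ to the right-hand side without changing the argument.
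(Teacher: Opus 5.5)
Your argument is correct and is essentially the standard proof from the cited source. The paper does not reprove Lemma~\ref{Lem:1}, but it runs exactly this argument in the proof of Lemma~\ref{Lem:3}: integrate the dissipation inequality from zero initial conditions, discard $V(x(T))\ge 0$ and the hard-IQC integral $\int_0^T z^{\rm T}Wz\,dt\ge 0$, and read off $\frac{1}{\gamma}\|e\|_{2,T}^2\le\gamma\|d\|_{2,T}^2$ uniformly in $T$. Your extraction of a strict margin $\epsilon$ from the negative-definite quadratic form in $(x,w,d)$ is a valid refinement that additionally gives $x,w\in L_2$ and a strict gain bound, which the paper's version does not record.
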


Note that a hard IQC factorization, together with the following $J$-spectral factorization result, is essential for applying the dissipation theorem. In addition to $J$-spectral factorizations, a multiplier $\Pi$ may admit other factorizations; see \cite{JWA2012,B1987,G1995} for further discussions. Thus, the hard property is not intrinsic to the multiplier itself, but depends on the particular factorization $(\Psi,W)$. In \cite{AA1997}, a rich collection of IQCs is presented for characterizing various classes of uncertainties and nonlinearities.

\begin{Def}\cite{P2015}
	$(\Psi,W)$ is called a $J_{m_1,m_2}$-spectral factor of $\Pi=\Pi^{\sim}\in{\mathbb{RL}_\infty^{(m_1+m_2)\times(m_1+m_2)}}$ if $\Pi = \Psi^{\sim}W\Psi$, $W=\left[\begin{array}{cc}
	I_{m_1} & 0 \\
	0 & -I_{m_2}
	\end{array}\right]$, and $\Psi,\Psi^{-1}\in{\mathbb{RH}_\infty^{(m_1+m_2)\times(m_1+m_2)}}$.
\end{Def}
\begin{Lem}\cite{P2015}{\label{Lem:2}}
	Let $\Pi = \Pi^{\sim}\in\mathbb{RL}_\infty^{(m_1+m_2)\times(m_1+m_2)}$ and partition this multiplier as a block matrix, i.e., $\Pi=\left[\begin{array}{cc}
	\Pi_{11} & \Pi_{12} \\
	\Pi_{12}^{\sim} & \Pi_{22}
	\end{array}\right]$, where $\Pi_{11}\in{\mathbb{RL}_{\infty}^{m_1\times m_1}}$ and $\Pi_{22}\in{\mathbb{RL}_{\infty}^{m_2\times m_2}}$. If $\Pi_{11}(j\omega)>0$ and $\Pi_{22}(j\omega)<0$ $\forall{\omega}\in{\mathbb{R}}\bigcup\{\infty\}$, then
	\begin{itemize}
		\item [1)] $\Pi$ has a $J_{m_1,m_2}$-spectral factorization $(\Psi,W)$;
		\item [2)] The $J_{m_1,m_2}$-spectral factorization $(\Psi,W)$ is a hard factorization of $\Pi$.
	\end{itemize}
\end{Lem}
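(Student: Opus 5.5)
We split the argument into the existence of the factorization (part 1) and its hardness (part 2).

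\emph{Part 1 (existence).} We use the standard state-space route via an algebraic Riccati equation. Take a minimal realization
\[
\Pi(s)=\begin{bmatrix} C(sI-A)^{-1} \\ I\end{bmatrix}^{\sim}\begin{bmatrix} Q & S\\ S^{\rm T} & R\end{bmatrix}\begin{bmatrix} C(sI-A)^{-1}\\ I\end{bmatrix},
\]
obtained by stacking a stable basis $(sI-A)^{-1}$ so that $A$ is Hurwitz. Such a realization exists because $\Pi\in\mathbb{RL}_\infty$ is para-Hermitian: split it into stable and antistable parts.

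The key steps are the following.
\begin{enumerate}[(i)]
\item Evaluate the hypotheses at $\omega=\infty$. This gives $R=\Pi(\infty)$ with $R_{11}>0$ and $R_{22}<0$, so $R$ has inertia $(m_1,m_2)$.
\item Note that $\Pi(j\omega)$ has constant inertia $(m_1,m_2)$ for every $\omega$. Indeed, $\Pi_{22}(j\omega)<0$ and the Schur complement $\Pi_{11}-\Pi_{12}\Pi_{22}^{-1}\Pi_{12}^{*}\ge\Pi_{11}>0$. In particular $\Pi(j\omega)$ is nonsingular on the extended imaginary axis, so the associated Hamiltonian matrix has no imaginary-axis eigenvalues.
\item Invoke the known result on $J$-spectral factorization: constant inertia plus no imaginary-axis Hamiltonian eigenvalues implies that the ARE $A^{\rm T}X+XA+Q-(XB+S)R^{-1}(XB+S)^{\rm T}=0$ has a stabilizing solution $X=X^{\rm T}$.
\item Factor $R=D^{\rm T}JD$ with $D$ invertible, and set
\[
\Psi=\begin{bmatrix}A & B\\ D\,J R^{-1}\ldots\end{bmatrix},
\]
that is, $\Psi(s)=D+JD^{-\rm T}(B^{\rm T}X+S^{\rm T})(sI-A)^{-1}B$ in suitable form.
\item Verify by direct computation that $\Psi^{\sim}J\Psi=\Pi$.
\item Conclude $\Psi\in\mathbb{RH}_\infty$ because $A$ is Hurwitz. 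Conclude $\Psi^{-1}\in\mathbb{RH}_\infty$ because the state matrix of $\Psi^{-1}$ is the closed-loop ARE matrix $A-BR^{-1}(XB+S)^{\rm T}$, which is Hurwitz since $X$ is stabilizing.
\end{enumerate}

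\emph{Part 2 (hardness).} Let $\Delta$ be bounded and causal, satisfying the frequency-domain IQC $\int \begin{bmatrix}\hat v\\ \hat w\end{bmatrix}^{*}\Pi\begin{bmatrix}\hat v\\ \hat w\end{bmatrix}d\omega\ge0$ for all $v\in L_2$. We use the standard homotopy/truncation argument.

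First, for any $T$, define truncated inputs $v_T=P_Tv$ and $w=\Delta(v_T)$. Next, use $\Pi_{22}<0$ uniformly, via step (ii), to argue that the time-domain quantity $\int_0^T z^{\rm T}Jz\,dt$ can only lose nonnegativity by at most a term that vanishes. Concretely, write $z=\Psi[v;w]$ and split $z=(z_1,z_2)$ with $z_2$ depending on $w$ through a block with stable inverse. Then $\int_0^\infty z^{\rm T}Jz\ge0$ holds by Parseval for $L_2$ signals. Causality of $\Psi$ lets us compare the integral over $[0,T]$ with the integral over $[0,\infty)$. For $t>T$ the contribution is the tail driven by a $w$-dependent signal, and we bound it using $\Pi_{22}<0$: the tail from the extension of $w$ after $T$ enters $z_2$ with a negative sign.

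The cleanest route is to express the tail via the state of $\Psi$ at time $T$ and a free extension of $w$. We choose the extension of $w$ after $T$ to zero out the $z$-component contribution, which the invertibility of $\Psi$ permits. We then show the remaining tail quadratic form equals $-x(T)^{\rm T}Xx(T)$-type terms of favorable sign.

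\emph{Main obstacle.} The main difficulty is exactly this sign control of the tail, which must show $\int_0^T z^{\rm T}Jz\ge\int_0^\infty z^{\rm T}Jz\ge0$. My first attempt would be a completion-of-squares argument with the stabilizing ARE solution. For signals in which $v$ is zero after $T$, the minimal future cost over extensions of $w$ is $\min_w\int_T^\infty z^{\rm T}Jz = x(T)^{\rm T}(\cdot)x(T)\le0$. Here the $\Pi_{22}<0$ hypothesis guarantees that the infimum is attained and nonpositive, which yields the hard inequality for every $T\ge0$.
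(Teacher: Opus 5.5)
The paper gives no proof of this lemma; it quotes it from Seiler's work. Your proposal therefore has to stand on its own, and it has two genuine gaps.

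\textbf{Part 2 (the more serious gap).} The soft IQC can only be applied to a pair $(\tilde v,\Delta(\tilde v))$. After time $T$ the signal $w$ is whatever $\Delta$ returns, so you are not free to choose its extension. (With $v\equiv 0$ after $T$ you also could not, in general, zero out all of $z$ through $w$ alone.) Your fallback of minimizing the tail over continuations of $w$ points the wrong way. Since $\Pi_{22}<0$, the tail is strictly concave in $w$, so its infimum is $-\infty$. Because $\Delta$'s continuation is arbitrary, what you actually need is $\sup_w\int_T^\infty z^{\rm T}Jz\,dt\le 0$. With $v$ truncated to zero this supremum is generally positive. Let $(v^\star,w^\star)\in L_2(T,\infty)$ be the input that drives $\Psi$ from its state $x_\psi(T)$ to identically zero output. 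For $v=0$ and $w=w^\star+\delta w$, the tail equals $\langle v^\star,\Pi_{11}v^\star\rangle-2\langle v^\star,\Pi_{12}\delta w\rangle+\langle \delta w,\Pi_{22}\delta w\rangle$ (frequency-domain pairings), which is positive at $\delta w=0$ whenever $v^\star\neq 0$.

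The repair is to extend $v$, not $w$. The pair $(v^\star,w^\star)$ exists because $\Psi^{-1}\in\mathbb{RH}_\infty$: run the inverse system from $x_\psi(T)$ with zero input; its state matrix is your Hurwitz closed-loop ARE matrix. Set $\tilde v=v$ on $[0,T]$, $\tilde v=v^\star$ on $(T,\infty)$, and $\tilde w=\Delta(\tilde v)$. By causality $\tilde w=w$ on $[0,T]$, and $\tilde w\in L_2$ by boundedness. With $\delta w=\tilde w-w^\star$ on $(T,\infty)$, linearity and time invariance give $\int_T^\infty \tilde z^{\rm T}J\tilde z\,dt=\frac{1}{2\pi}\int_{-\infty}^{\infty}\widehat{\delta w}^{*}\Pi_{22}\widehat{\delta w}\,d\omega\le 0$. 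Hence $\int_0^T z^{\rm T}Jz\,dt\ge\int_0^\infty \tilde z^{\rm T}J\tilde z\,dt\ge 0$. This part uses only $\Pi_{22}\le 0$ and the bistability of $\Psi$.

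\textbf{Part 1, step (iii).} The result you invoke is misstated. A stabilizing ARE solution requires more than the absence of imaginary-axis Hamiltonian eigenvalues. The stable invariant subspace must also be complementary to $\mathrm{Im}\begin{bmatrix}0 & I\end{bmatrix}^{\rm T}$, equivalently the Toeplitz operator $T_\Pi=P_{H_2}\Pi|_{H_2}$ must be invertible, and constant inertia does not supply this. Take $\Pi=\begin{bmatrix}0 & b\\ b^{\sim} & 0\end{bmatrix}$ with $b=\frac{s-1}{s+1}$. It is para-Hermitian, has eigenvalues $\pm 1$ on the whole extended axis, and satisfies $\Pi^{-1}=\Pi\in\mathbb{RL}_\infty$. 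Yet $T_\Pi$ annihilates $(\frac{1}{s+1},0)$, so no $J$-spectral factorization exists, since one would make $T_\Pi=T_\Psi^{*}JT_\Psi$ invertible.

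You must use the hypotheses blockwise, not only through the inertia. Uniform bounds $\Pi_{11}\ge\epsilon I$ and $\Pi_{22}\le-\epsilon I$ on the extended axis give $T_{\Pi_{11}}\ge\epsilon I$ and $T_{\Pi_{22}}\le-\epsilon I$. Then $T_\Pi$ is invertible by a Schur complement, which is exactly the missing complementarity condition. After that, your steps (iv)--(vi) go through, with the realization written as $\begin{bmatrix}(sI-A)^{-1}B\\ I\end{bmatrix}$.
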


Lemma \ref{Lem:2} provides a sufficient condition for the existence of a $J_{m_1,m_2}$-spectral factorization. This result will be used later to construct hard IQC factorizations for the multipliers characterizing the dead-zone nonlinearity in the controller synthesis procedure.

\subsection{System description and control objective}
This section presents the problem formulation for robust stabilization and performance analysis of a class of uncertain systems subject to input saturation within a general $\mathcal{H}_\infty$ framework.

Consider a class of uncertain linear systems subject to input saturation, represented in the following LFT form:
\begin{eqnarray}\label{eq:1}
\dot{x}_p &\hspace*{-1.5ex}=\hspace*{-1.5ex}& Ax_p + B_1p + B_2d + B_0Sat(u), \nonumber \\
q &\hspace*{-1.5ex}=\hspace*{-1.5ex}& C_0x_p + D_{01}p + D_{02}d + D_{00}Sat(u), \nonumber \\
e &\hspace*{-1.5ex}=\hspace*{-1.5ex}& C_1x_p + D_{11}p + D_{12}d + D_{10}Sat(u), \nonumber \\
p &\hspace*{-1.5ex}=\hspace*{-1.5ex}& \Delta q,
\end{eqnarray}
where $x_p\in\mathbb{R}^{n_x}$ is the state vector of the plant, $u\in\mathbb{R}^{n_u}$ is the control input, $d\in\mathbb{R}^{n_d}$ is the exogenous input including external disturbances, measurement noise, and reference signals, and $e\in\mathbb{R}^{n_e}$ is the performance output. The signals $p,q\in\mathbb{R}^{n_q}$ are the internal variables interconnected with the uncertainty block, representing its input and output, respectively. The unknown matrix $\Delta$ denotes a norm-bounded, time-varying structured uncertainty of the form
\begin{align}\label{eq:2}
\bm{\Delta} &\triangleq \Big\{\mathrm{diag}\{\delta_1I_{m_1},\ldots,\delta_{\nu}I_{m_{\nu}},\Delta_{{\nu}+1},\ldots,\Delta_{{\nu}+f}\}: \nonumber \\
& \delta_i:\mathbb{R}_+\to \mathbb{R},\ |\delta_i|\le b,\ i\in\mathbf{I}[1,{\nu}]; \nonumber \\
& \Delta_{{\nu}+j}:\mathbb{R}_+\to \mathbb{R}^{r_j\times r_j},\ \|\Delta_{{\nu}+j}\|\le b,\ j\in\mathbf{I}[1,f]\Big\},
\end{align}
where $b$ is a positive constant, and $\sum_{i=1}^{{\nu}}m_i+\sum_{j=1}^{f}r_j=n_q$.

From \eqref{eq:1}, the saturation nonlinearity $Sat(\cdot)$ acts componentwise on $u$. Without loss of generality, the input saturation nonlinearity for $n_u=1$ is described by
\begin{eqnarray}\label{eq:3}
Sat(u)\triangleq
\left\{
\begin{array}{ll}
\bar{u}\operatorname{sign}(u), & |u|\ge \bar{u}, \\[0.5ex]
u, & |u|<\bar{u},
\end{array}
\right.
\end{eqnarray}
where $\bar{u}>0$ denotes the saturation bound. It is clear that $Sat(u)$ is not differentiable at $|u|=\bar{u}$. 

We impose the following standard assumption on the nominal plant.

\begin{Asu}\label{Asu:2}
	The state variables of non-saturated linear system (1) is measurable, $A$ is stable and the pair $(A,[B_1~B_0])$ is stabilizable.
\end{Asu}

Note that Assumption \ref{Asu:2} is necessary to guarantee
the existence of a feedback stabilizing controller for a non-saturated uncertain linear system (\ref{eq:1}).

In this paper, we consider the robust $\mathcal{H}_\infty$ state-feedback control problem for the uncertain system \eqref{eq:1} within the IQC framework. The objective is to design a state-feedback control law such that, for all admissible uncertainties $\Delta\in\boldsymbol{\Delta}$ and input saturation nonlinearities $Sat(u)$, the resulting closed-loop system is stable and the $\mathcal{L}_2$ gain from the disturbance input $d$ to the performance output $e$ is bounded by a scalar $\gamma$, namely,
$\|e\|_2 \le \gamma \|d\|_2$ under zero initial conditions.

%\begin{Def}\cite{MP1996}
%    A subset $\mathcal{D}$ of the complex plane is called an LMI region if there exist a symmetric matrix $\xi=[\xi_{kl}]\in\mathbb{R}^{m\times m}$ and a matrix $\eta_{kl}\in\mathbb{R}^{m\times m}$ such that
%    \begin{eqnarray}
%        \mathcal{D} = \{y\in\mathbb{C}:f_{\mathcal{D}}(y)<0\}
%    \end{eqnarray}
%    with
%    $f_{\mathcal{D}}(y):=\xi+y\eta+\bar{y}\eta^{\rm T}=[\xi_{kl}+\eta_{kl}y+\eta_{lk}\bar{z}]_{1\le k,l\le m}$.
%\end{Def}
%\begin{Lem}\cite{MP1996}
%    The matrix $A$ is $\mathcal{D}$-stable if and only if there exists a symmetric matrix $X_D$ such that
%    \begin{eqnarray}
%        M_{D}(A,X_D)<0,~~X_D>0
%    \end{eqnarray}
%\end{Lem}
\section{Main Results}
This section develops the main controller synthesis results. We first introduce a loop transformation to obtain a physically realizable representation of the closed-loop system. The transformed system is then expressed as a standard LTI system interconnected with the uncertainty block $\Delta$ and the nonlinearity block $\mathcal{N}$. Based on this representation, mixed IQCs are incorporated to derive state-feedback controller synthesis conditions.

\subsection{System Transformation}

As described in the previous section, the LFR model in \eqref{eq:1} contains both block-structured time-varying uncertainties and an input saturation nonlinearity acting on the LTI system. To facilitate controller synthesis within a unified IQC framework, we first reformulate the input saturation by separating the linear and nonlinear parts. Specifically, the saturation nonlinearity is written as
\[
Sat(u(t)) = u(t)-\mathcal{N}(u(t)),
\]
where $\mathcal{N}(u(t))$ denotes the dead-zone nonlinearity.

To characterize the dead-zone nonlinearity by dynamic IQCs, especially the Popov IQC, a loop transformation is required. As discussed in \cite{AA1997,HP2014}, the Popov multiplier is not proper and hence cannot be directly used in the present synthesis framework. To overcome this difficulty, we introduce the loop transformation
\[
\bar{\Delta}_{\mathcal{N}}\triangleq\Delta_{\mathcal{N}}\circ \frac{1}{s+\alpha},
\]
that is, $w(t)=(\bar{\Delta}_{\mathcal{N}}v)(t)$, where
\[
\dot{u}(t)=-\alpha u(t)+v(t), \qquad u(0)=0,
\]
with $\alpha>0$. This modification is illustrated in Fig.~\ref{Fig:1}. 
\begin{figure}[htbp]
	\centering
	\includegraphics[width=0.45\textwidth]{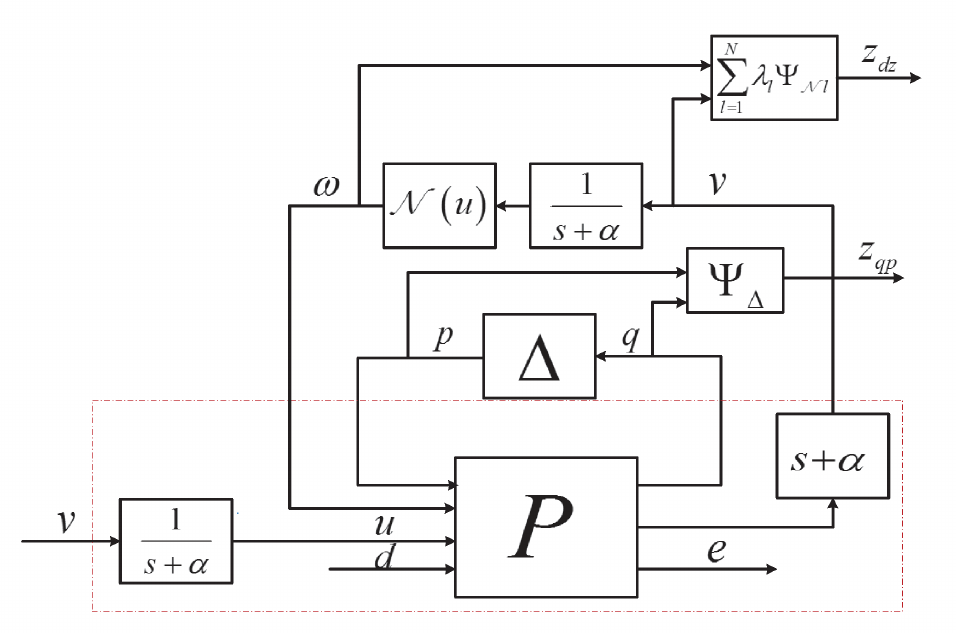}
	\caption{Transformed uncertain LFT system. \label{Fig:1}}
\end{figure}

To ensure physical realizability, the factor $(s+\alpha)$ introduced by the loop transformation must be absorbed into the known plant dynamics, which motivates the following assumption.

\begin{Asu}\label{Asu:1}
	The nominal part of the LFT system \eqref{eq:1} is strictly proper.
\end{Asu}

With the above loop transformation, the original uncertain LFT system can be rewritten as a new LFT interconnection with two layers of uncertain/nonlinear blocks, namely, the bounded time-varying structured uncertainty block and the saturation-induced dead-zone nonlinearity block. In particular, the inputs and outputs of these two blocks are fed into the corresponding IQC multipliers. The transformed LFT representation is depicted in Fig.~\ref{Fig:1}. Absorbing the factor $(s+\alpha)$ into the known part of the plant yields
\begin{eqnarray}\label{eq:12}
\left[\begin{array}{c}
\dot{x}_p \\
\dot{u} \\
q \\
e
\end{array}\right]&\hspace*{-1.5ex}=\hspace*{-1.5ex}&\left[\begin{array}{cccc}
A & B_0 & B_1 & -B_0 \\
0 & -\alpha I_{n_u} & 0 & 0 \\
C_0 & D_{00} & D_{01} & -D_{00} \\
C_1 & D_{10} & D_{11} & -D_{10} \\
\end{array}\right. \nonumber \\
&\hspace*{-1.5ex} \hspace*{-1.5ex}&\left.\begin{array}{cc}
B_2 & 0 \\
0 & I_{n_u} \\
D_{02} & 0 \\
D_{12} & 0
\end{array}\right]\left[\begin{array}{c}
x_p \\
u \\
p \\
w \\
d \\
v
\end{array}\right] \nonumber \\
p &\hspace*{-1.5ex}=\hspace*{-1.5ex}& \Delta q \nonumber \\
\mathcal{N}(u) &\hspace*{-1.5ex}=\hspace*{-1.5ex}& u-Sat(u),
\end{eqnarray}
where the augmented state vector is $x=[x_p^{\rm T}~u^{\rm T}]^{\rm T}\in\mathbb{R}^{n_x+n_u}$, and $v\in\mathbb{R}^{n_u}$ is the newly introduced input associated with the loop transformation.

\subsection{IQC characterization}
We next characterize the dead-zone nonlinearity and the structured uncertainty block by IQCs. To begin with, the input saturation $Sat(u)$ is a memoryless nonlinearity in the sector $[0,1]$; see \cite{HP2014}. Hence,
\begin{eqnarray}\label{eq:6}
Sat(u)\big(u-Sat(u)\big)\ge 0,\qquad \forall u\in\mathbb{R},
\end{eqnarray}
which implies a sector-bound condition. This condition can be characterized by the static IQC multiplier
\[
\Pi_S=\left[\begin{array}{cc}
0.01 & 1 \\
1 & -2.01
\end{array}\right].
\]
Here, a small positive perturbation is introduced so that the multiplier satisfies the strict definiteness condition required by Lemma~\ref{Lem:2}. Moreover, if uncertainties or nonlinearities satisfy several IQCs defined by multipliers $\Pi_1,\ldots,\Pi_n$, then any positive linear combination
\[
\Pi=\lambda_1\Pi_1+\cdots+\lambda_n\Pi_n, \qquad \lambda_i>0,
\]
also defines a valid IQC. In this sense, $\Pi_S$ can be viewed as a combination of
\[
\left[\begin{array}{cc}
0 & 1 \\
1 & -2
\end{array}\right]
\quad \text{and} \quad
0.01\left[\begin{array}{cc}
1 & 0\\
0 & -1
\end{array}\right],
\]
which ensures that the resulting multiplier satisfies the condition of Lemma~\ref{Lem:2} and thus admits a $J$-spectral factorization.

Besides the static IQC, the Popov IQC can also be used to describe sector-bounded nonlinearities. In particular, the Popov multiplier
\[
\Pi_P\triangleq\pm\left[\begin{array}{cc}
0 & -j\omega \\
j\omega & 0
\end{array}\right]
\]
provides an alternative characterization. However, this multiplier is not proper because it is unbounded on the imaginary axis. Owing to the above loop transformation, a proper Popov-type IQC can be constructed for the transformed nonlinearity $\bar{\Delta}_{\mathcal{N}}$. Furthermore, to ensure a hard IQC factorization and satisfy Lemma~\ref{Lem:2}, we consider the slightly modified Popov multiplier
\[
\Pi_P = \left[\begin{array}{cc}
0 & -s \\
s & 0
\end{array}\right]
+0.01\left[\begin{array}{cc}
1 & 0 \\
0 & -1
\end{array}\right].
\]
Accordingly, the transformed nonlinearity $\bar{\Delta}_{\mathcal{N}}$ can be characterized by
\begin{eqnarray}\label{eq:7}
\Pi_{\bar{P}} = \left[\begin{array}{cc}
\frac{1}{s+\alpha} & 0\\
0 & 1
\end{array}\right]^{\sim}\left[\begin{array}{cc}
0.01 & -s \\
s & -0.01
\end{array}\right]\left[\begin{array}{cc}
\frac{1}{s+\alpha} & 0 \\
0 & 1
\end{array}\right].
\end{eqnarray}
This modified Popov multiplier satisfies the conditions of Lemma~\ref{Lem:2}, and hence admits a hard $J$-spectral factorization
\[
\Pi_{\bar{P}} = \Psi_{\bar{P}}^{\sim} W_{\bar{P}} \Psi_{\bar{P}},
\]
which will be used in the subsequent dissipation-inequality-based synthesis.

Another useful characterization is provided by the Zames--Falb IQC, which applies to odd and monotone nonlinearities; see \cite{GP1968,DM2011} and the references therein. The corresponding multiplier is given by
\[
\Pi_{ZF} \triangleq \left[\begin{array}{cc}
0.01 & 1+H(s) \\
1+H(s)^{*} & -2.01-(H(s)+H(s)^{*})
\end{array}\right],
\]
where $H\in\mathbb{RL}_\infty$ is arbitrary except that the $\mathcal{L}_1$ norm of its impulse response is no larger than $1$.

In view of Fig.~\ref{Fig:1}, we combine the static IQC, the Popov IQC, and the Zames--Falb IQC to characterize the sector-bounded dead-zone nonlinearity. Similar to the Popov case, the static and Zames--Falb IQC multipliers are adapted to the transformed nonlinearity $\bar{\Delta}_{\mathcal{N}}$ as
\begin{eqnarray}\label{eq:8}
\Pi_{\bar{S}} = \left[\begin{array}{cc}
\frac{1}{s+\alpha} & 0 \\
0 & 1
\end{array}\right]^{\sim}\left[\begin{array}{cc}
0.01 & 1 \\
1 & -2.01
\end{array}\right]\left[\begin{array}{cc}
\frac{1}{s+\alpha} & 0 \\
0 & 1
\end{array}\right]
\end{eqnarray}
and
\begin{eqnarray}\label{eq:9}
&\hspace*{-1.5ex} \hspace*{-1.5ex}& \Pi_{\bar{ZF}} = \left[\begin{array}{cc}
\frac{1}{s+\alpha} & 0 \\
0 & 1
\end{array}\right]^{\sim}\nonumber \\
&\hspace*{-1.5ex} \hspace*{-1.5ex}&\left[\begin{array}{cc}
0.01 & 1+H(s) \\
1+H(s)^{*} & -2.01-(H(s)+H(s)^{*})
\end{array}\right]\left[\begin{array}{cc}
\frac{1}{s+\alpha} & 0 \\
0 & 1
\end{array}\right]. \nonumber \\
&\hspace*{-1.5ex} \hspace*{-1.5ex}&
\end{eqnarray}
To balance the roles of the static, Popov, and Zames--Falb multipliers in characterizing the sector-bounded nonlinearity, we introduce positive scaling factors $\lambda_i>0$, $i=1,2,3$, and define the mixed IQC multiplier as
\[
\Pi_{\mathcal{N}}= \lambda_1\Pi_{\bar{P}}+\lambda_2\Pi_{\bar{ZF}}+\lambda_3\Pi_{\bar{S}}.
\]

\begin{Rem}\label{Rem:1}
	The Zames--Falb IQC multiplier itself is proper if a suitable $H(s)$ is chosen. Therefore, strictly speaking, it is not necessary to introduce an inertial element when only the Zames--Falb IQC is used to characterize the dead-zone nonlinearity. If the Zames--Falb IQC is considered without loop transformation, the explicit form of the transformed LFT structure would be slightly different, whereas the robust controller synthesis procedure would remain essentially the same. Due to space limitations, the details are omitted.
\end{Rem}

We now turn to the IQC characterization of the time-varying uncertainty block $\Delta$. From \eqref{eq:2}, each diagonal block of $\bm{\Delta}$ is norm-bounded by a positive scalar $b$, that is, $\|\Delta\|_2\le b$. Introduce scaling matrices $X_{m_k}\in\mathbb{S}^{m_k\times m_k}$ and scalars $\chi_k$. Then, after $J_{n_q,n_q}$-spectral factorization, the IQC multiplier associated with the repeated scalar block can be written as follows.
For $k\in\mathbf{I}[1,{\nu}]$, let
\[
{\Psi}_{\Delta_k}\triangleq\left[\begin{array}{cc}
bI_{m_k} & 0 \\
0 & I_{m_k}
\end{array}\right], \quad
{W}_{\Delta,k}\triangleq\left[\begin{array}{cc}
X_{m_k} & 0 \\
0 & -X_{m_k}
\end{array}\right],
\]
where $X_{m_k}\in\mathbb{S}^{m_k\times m_k}$. Then the corresponding IQC multiplier is
\begin{align}\label{eq:10}
&\Pi_{\Delta_k}
\triangleq
\Psi_{\Delta_k}^{\sim}W_{\Delta,k}\Psi_{\Delta_k} \nonumber \\
&=
\left[\begin{array}{cc}
bI_{m_k} & 0 \\
0 & I_{m_k}
\end{array}\right]^{\sim}
\left[\begin{array}{cc}
X_{m_k} & 0 \\
0 & -X_{m_k}
\end{array}\right]
\left[\begin{array}{cc}
bI_{m_k} & 0 \\
0 & I_{m_k}
\end{array}\right].
\end{align}

Similarly, for the remaining uncertainty blocks, let
\[
{\Psi}_{\Delta_k}\triangleq\left[\begin{array}{cc}
b & 0 \\
0 & 1
\end{array}\right], \qquad
{W}_{\Delta,k}\triangleq\left[\begin{array}{cc}
\chi_k & 0 \\
0 & -\chi_k
\end{array}\right],
\]
for $k\in\mathbf{I}[{\nu}+1,n_q]$, where $\chi_k>0$. Then the corresponding IQC multiplier is
\begin{align}\label{eq:11}
\Pi_{\Delta_k}
&\triangleq
\Psi_{\Delta_k}^{\sim}W_{\Delta,k}\Psi_{\Delta_k} \nonumber \\
&=
\left[\begin{array}{cc}
b & 0 \\
0 & 1
\end{array}\right]^{\sim}
\left[\begin{array}{cc}
\chi_k & 0 \\
0 & -\chi_k
\end{array}\right]
\left[\begin{array}{cc}
b & 0 \\
0 & 1
\end{array}\right].
\end{align}

\subsection{Mixed IQC-Based State-Feedback Controller Synthesis}

With the loop transformation and the above IQC characterizations, the robust $\mathcal{H}_\infty$ control problem for the original input-saturated uncertain system \eqref{eq:1} is converted into a controller synthesis problem for the transformed system \eqref{eq:12} within the mixed-IQC framework. Specifically, the objective is to design a $\gamma$-suboptimal state-feedback controller for \eqref{eq:12} such that the resulting closed-loop system is stable and satisfies $\|e\|_2<\gamma\|d\|_2$
under zero initial conditions for all admissible uncertainties and saturation nonlinearities.

For an IQC multiplier $\Pi=\Pi^{\sim}\in\mathbb{RL}_\infty^{(m_1+m_2)\times(m_1+m_2)}$, partitioned as $\Pi=\begin{bmatrix}
\Pi_{11} & \Pi_{12}\\
\Pi_{12}^{\sim} & \Pi_{22}
\end{bmatrix}$, where $\Pi_{11}$ and $\Pi_{22}$ are of dimensions $m_1\times m_1$ and $m_2\times m_2$, respectively, suppose that $\Pi_{11}(j\omega)>0$ and $\Pi_{22}(j\omega)<0$ for all $\omega\in\mathbb{R}\cup\{\infty\}$. Then $\Pi$ admits a $J_{m_1,m_2}$-spectral factorization $(\Psi,W)$, where
\[
\Psi\triangleq\begin{bmatrix}
\Psi_{11} & \Psi_{12}\\
\Psi_{21} & \Psi_{22}
\end{bmatrix}, \qquad
W\triangleq\begin{bmatrix}
I_{m_1} & 0\\
0 & -I_{m_2}
\end{bmatrix}.
\]
To facilitate the robust stability analysis and derive the performance inequality $\frac{1}{\gamma}\int_0^T e^{\rm T}(t)e(t)\,dt
<
\gamma\int_0^T d^{\rm T}(t)d(t)\,dt$, 
the factor $\Psi$ is further converted into the special form $\bar{\Psi}\triangleq
\begin{bmatrix}
\bar{\Psi}_{11} & \bar{\Psi}_{12}\\
0 & I_{m_2}
\end{bmatrix}$ where $\bar{\Psi}_{11}\triangleq\Psi_{11}-\Psi_{12}\Psi_{22}^{-1}\Psi_{21}$ and $\bar{\Psi}_{12}\triangleq\Psi_{12}\Psi_{22}^{-1}$. We refer the reader to \cite{CF2016,CF2017} for further details.

With this preparation, the IQC-induced system $\Psi_{\Delta_k}$ associated with the time-varying structured uncertainty block $\Delta$ in \eqref{eq:12} can be written, for all $k\in\mathbf{I}[1,n_q]$, as
\begin{eqnarray}\label{eq:13}
\begin{bmatrix}
z_{\Delta 1,k} \\
z_{\Delta 2,k}
\end{bmatrix}
=
\begin{bmatrix}
D_{\Delta\bar{\Psi}1,k} & D_{\Delta\bar{\Psi}2,k} \\
0 & I_{n_q}
\end{bmatrix}
\begin{bmatrix}
q \\
p
\end{bmatrix}.
\end{eqnarray}
For each bounded time-varying uncertainty block $\Delta_k\in\bm{\Delta}$, the corresponding multipliers \eqref{eq:10} and \eqref{eq:11} are used to characterize its input-output behavior for $k\in\mathbf{I}[1,{\nu}]$ and $k\in\mathbf{I}[{\nu}+1,n_q]$, respectively. Owing to the special structure of $\bar{\Psi}$ described above, the output $z_{\Delta,k}=\left[\begin{matrix}z_{\Delta 1,k}^{\rm T} & z_{\Delta 2,k}^{\rm T}\end{matrix}\right]^{\rm T}$ is given by \eqref{eq:13}.

Similarly, the IQC-induced LTI system $\Psi_{\mathcal{N},l}$ associated with the sector-bounded nonlinearity $\mathcal{N}$ in \eqref{eq:12} can be represented in state-space form, for $l\in\mathbf{I}[1,N_{n_\psi}]$ \footnote{We use static IQC, Popov IQC and Zames-Falb IQC to characterize the input-output behavior of dead-zone nonlinearity in our design, so the number here is $N_{n_\psi}=3$.}, as
\begin{eqnarray}\label{eq:14}
\begin{bmatrix}
\dot{x}_{\mathcal{N}\bar{\Psi}} \\
z_{\mathcal{N} 1,l} \\
z_{\mathcal{N} 2,l}
\end{bmatrix}
=
\begin{bmatrix}
A_{\mathcal{N}\bar{\Psi}} & B_{\mathcal{N}\bar{\Psi}1} & B_{\mathcal{N}\bar{\Psi}2} \\
C_{\mathcal{N}\bar{\Psi},l} & D_{\mathcal{N}\bar{\Psi}1,l} & D_{\mathcal{N}\bar{\Psi}2,l} \\
0 & 0 & I_{n_u}
\end{bmatrix}
\begin{bmatrix}
x_{\mathcal{N}\bar{\Psi}} \\
v\\
w
\end{bmatrix},
\end{eqnarray}
where the augmented state vector is $x_{\mathcal{N}\bar{\Psi}}\in\mathbb{R}^{n_\psi}$, and $n_\psi$ depends on the number of IQC multipliers used in the design. Moreover, $z_{\mathcal{N},l}=
\begin{bmatrix}
z_{\mathcal{N} 1,l}^{\rm T} & z_{\mathcal{N} 2,l}^{\rm T}
\end{bmatrix}^{\rm T}$ denotes the corresponding output vector.

It is worth emphasizing that different IQC strategies are adopted for the two classes of uncertain blocks. For the sector-bounded nonlinearity, several IQC multipliers with scaling factors are combined to obtain a mixed-IQC description, and the total number of such multipliers is denoted by $N_{n_\psi}$. In principle, more IQCs may be incorporated to refine the nonlinearity description, provided that they satisfy the corresponding definitions and factorization conditions. Since static IQCs can be viewed as a special case of dynamic IQCs, the use of dynamic IQCs generally provides a richer characterization of the nonlinearity. In particular, the introduction of scaling factors offers additional degrees of freedom for reducing the achievable $\mathcal{L}_2$ gain level $\gamma$.

By contrast, for the bounded time-varying uncertainty set $\bm{\Delta}$, we use one static IQC for each uncertainty block. As a result, the number of IQC multipliers is consistent with the number of blocks in $\bm{\Delta}$. Each multiplier is weighted by a scaling factor whose structure depends on the repeated scalar and full-block components. Based on this construction, we next derive a new scaled bounded real lemma within the IQC framework. According to the dissipation inequality, this result guarantees not only robust closed-loop stability but also the desired $\mathcal{L}_2$-gain performance. In this sense, the resulting condition plays a role similar to the standard scaled bounded real lemma (sBRL) \cite{PP1995}, but is developed here from an IQC-based perspective.

Based on Assumption \ref{Asu:2}, we consider the following state-feedback controller:
\begin{align}\label{eq:15}
v = F_c\begin{bmatrix}
x \\
x_{\mathcal{N}\bar{\Psi}}
\end{bmatrix} + H_cw,
\end{align}
where $F_c\in\mathbb{R}^{n_u\times(n_x+n_u+n_\psi)}$ and $H_c\in\mathbb{R}^{n_u\times n_u}$ are controller gain matrices to be designed. Unlike a standard state-feedback law, the dead-zone output $w$ is fed back as an additional scheduling variable together with the linear state-feedback term. This additional term plays an important role in adjusting the control signal online and improving the disturbance attenuation level $\gamma$.

Substituting the controller \eqref{eq:15} into \eqref{eq:12}, the augmented closed-loop system consisting of the transformed plant \eqref{eq:12} and the IQC-induced systems \eqref{eq:13}--\eqref{eq:14} can be written as
\begin{align}\label{eq:16}
\left[\begin{matrix}
\dot{x}_{cl} \\
z_{\Delta 1,k} \\
z_{\mathcal{N} 1,l} \\
e
\end{matrix}\right] &=\left[\begin{matrix}
A_{cl} & B_{cl0} & B_{cl1} & B_{cl2}\\
C_{cl\Delta1,k} & D_{cl\Delta10,k} & D_{cl\Delta11,k} & D_{cl\Delta12,k} \\
C_{cl\mathcal{N}1,l} & D_{cl\mathcal{N}10,l} & D_{cl\mathcal{N}11,l} & D_{cl\mathcal{N}12,l}\\
C_{cl2} & D_{cl20} & D_{cl21} & D_{cl22} 
\end{matrix}\right]\left[\begin{matrix}
x_{cl} \\
p \\
w \\
d
\end{matrix}\right],  \nonumber \\
p &= \Delta q, \nonumber \\
w &= \mathcal{N}(u),
\end{align}
where the overall state is $x_{cl}\triangleq[x^{\rm T}~ x_{\mathcal{N}\bar{\Psi}}^{\rm T}]^{\rm T}$ with
\begin{align}\label{eq:17}
A_{cl} &= \left[\begin{matrix}
A & B_0 & 0 \\
0 & -\alpha I_{n_u} & 0 \\
0 & 0 & A_{\mathcal{N}\bar{\Psi}}
\end{matrix}\right]+ \left[\begin{matrix}
0 \\
I_{n_u} \\
B_{\mathcal{N}\bar{\Psi}1}
\end{matrix}\right]F_c , \nonumber \\
B_{cl0} &= \left[\begin{matrix}
B_1 \\
0 \\
0
\end{matrix}\right], \nonumber \\
\left[\begin{matrix}
B_{cl1} & B_{cl2}
\end{matrix}\right] &=\left[\begin{array}{c:c}
\left[\begin{matrix}
-B_0 \\
0 \\
B_{\mathcal{N}\bar{\Psi}_2}
\end{matrix}\right] +\left[\begin{matrix}
0 \\
I_{n_u} \\
B_{\mathcal{N}\bar{\Psi}_1}
\end{matrix}\right]H_c & \left[\begin{matrix}
B_2 \\
0 \\
0
\end{matrix}\right]
\end{array}\right], \nonumber \\
\left[\begin{matrix}
C_{cl\Delta1,k} \\
C_{cl\mathcal{N}1,l} \\
C_{cl2}
\end{matrix}\right] &= \left[\begin{array}{c}
\left[\begin{matrix}
D_{\Delta\bar{\Psi}1,k}C_0 & D_{\Delta\bar{\Psi}1,k}D_{00} & 0
\end{matrix}\right]\\
\left[\begin{matrix}
0 & 0 & C_{\mathcal{N}\bar{\Psi},l}
\end{matrix}\right]+D_{\mathcal{N}\bar{\Psi}1,l}F_c \\
\left[\begin{matrix}
C_1 & D_{10} & 0
\end{matrix}\right]
\end{array}\right], \nonumber \\
D_{cl\Delta10,k} &= D_{\Delta\bar{\Psi}1,k}D_{01}+D_{\Delta\bar{\Psi}2,k}, D_{cl22} = D_{12}, \nonumber \\
D_{cl\Delta11,k} &= -D_{\Delta\bar{\Psi}1,k}D_{00}, D_{cl\Delta12,k} = D_{\Delta\bar{\Psi}1,k}D_{02}, \nonumber \\
D_{cl\mathcal{N}10,l} &= 0, D_{cl\mathcal{N}11,l} = D_{\mathcal{N}\bar{\Psi}1,l}H_c+D_{\mathcal{N}\bar{\Psi}2,l}, \nonumber \\
D_{cl\mathcal{N}12,l} &= 0,D_{cl20} = D_{11}, D_{cl21} = -D_{10}.
\end{align}

The subsequent LMI-based synthesis relies on a bounded-real-lemma-type condition. We therefore first establish a new scaled bounded real lemma for the robust $\mathcal{L}_2$ stability and performance analysis of the augmented closed-loop system \eqref{eq:16} within the IQC framework.
%In particular, for all $\textbf{I}[1,N_\lambda]$, we have
%    \begin{eqnarray}
%        C_{cl1,k} &\hspace*{-1.5ex}=\hspace*{-1.5ex}& \left[\begin{array}{c}
%          \bar{C}_{cl1,k} \\
%          0
%        \end{array}\right] \nonumber \\
%        &\hspace*{-1.5ex}=\hspace*{-1.5ex}& \left[\begin{array}{ccc}
%          0 & 0 & \bar{C}_{\Psi,k} \\
%          0 & 0 & 0
%        \end{array}\right] + \left[\begin{array}{c}
%          \bar{D}_{\Psi1,k} \\
%          0
%        \end{array}\right]F_c \nonumber
%        \end{eqnarray}
%        \begin{eqnarray}\left[
%        \begin{array}{ccc}
%          D_{cl10,k} & D_{cl11,k} & D_{cl12,k}\end{array}\right] \nonumber \\
%           = \left[\begin{array}{ccc}
%            0 & \bar{D}_{\Psi1,k}H_c+\bar{D}_{\Psi2,k} & 0 \\
%            0 & I_{n_u} & 0
%          \end{array}\right]
%    \end{eqnarray}

\begin{Lem}\label{Lem:3}
	Consider a time-varying structured parameter set $\bm{\Delta}$ defined in (\ref{eq:2}) and augmented closed-loop system (\ref{eq:16}). The following statements are equivalent.
	\begin{enumerate}[i)]
		\item The closed-loop system (\ref{eq:16}) with time-varying uncertainties and input saturation is robustly stable, and if there exist positive-definite block-diagonal scaling matrix $\Gamma=\text{diag}\{X_{m_1},\ldots,X_{m_s},\chi_{s+1},\ldots,\chi_{n_q}\}\in\mathbb{S}_{+}^{n_q}$, scaling factors $\lambda_l\in\mathbb{R}_{+}$ for all $l\in\textbf{I}[1,N_{n_\psi}]$ and a positive scalar $\gamma\in\mathbb{R}_+$, the $\mathcal{L}_2$-gain performance level is no larger than $\gamma$ for all bounded, time-varying structured uncertainties $\Delta\in{\bm{\Delta}}$ and dead-zone nonlinearity $\mathcal{N}(u)$.
		\item There exists a positive-definite matrix $P\in\mathbb{S}_+^{n_x+n_u+n_\psi}$ such that the matrix inequality \eqref{eq:18} holds, where the matrices are defined as follows.
		\begin{table*}
		\begin{align}\label{eq:18}
		\left[\begin{matrix}
		He\left\{PA_{cl}\right\} & \star & \star & \star & \star & \star & \star \\
		B_{cl0}^{\rm T}P & -\Gamma & \star & \star & \star & \star & \star \\
		B_{cl1}^{\rm T}P & 0 & -\sum_{l=1}^{N_{n_\psi}}\lambda_lI_{n_u} & \star & \star & \star & \star \\
		B_{cl2}^{\rm T}P & 0 & 0 & -\gamma I_{n_d}  & \star & \star & \star  \\
		\Theta_{cl\Delta 1} & \Theta_{cl\Delta 10} & \Theta_{cl\Delta 11} & \Theta_{cl\Delta12} & -\Gamma^{-1} & \star & \star \\
		\Xi_{cl\mathcal{N}1} & \Xi_{cl\mathcal{N}10} & \Xi_{cl\mathcal{N}11} & \Xi_{cl\mathcal{N}12} & 0 & -\Lambda & \star \\
		C_{cl2} & D_{cl20} & D_{cl21} & D_{cl22} & 0 & 0 & -\gamma I_{n_e}
		\end{matrix}\right]<0
		\end{align}
		\end{table*}
		\begin{align*}
		\Theta_{cl\Delta1}&= \left[\begin{matrix}
		C_{cl\Delta1,1}^{\rm T} & \ldots & C_{cl\Delta1,n_q}^{\rm T}
		\end{matrix}\right]^{\rm T} \nonumber \\
		\Theta_{cl\Delta10}&= \left[\begin{matrix}
		D_{cl\Delta10,1}^{\rm T} & \ldots & D_{cl\Delta10,n_q}^{\rm T}
		\end{matrix}\right]^{\rm T} \nonumber \\
		\Theta_{cl\Delta11}&= \left[\begin{matrix}
		D_{cl\Delta11,1}^{\rm T} & \ldots & D_{cl\Delta11,n_q}^{\rm T}
		\end{matrix}\right]^{\rm T} \nonumber \\
		\Theta_{cl\Delta12}&= \left[\begin{matrix}
		D_{cl\Delta12,1}^{\rm T} & \ldots & D_{cl\Delta12,n_q}^{\rm T}
		\end{matrix}\right]^{\rm T} \end{align*}
		\begin{align*}
		\Xi_{cl\mathcal{N}1} &= \left[\begin{matrix}
		C_{cl\mathcal{N}1,1}^{\rm T} & \ldots & C_{cl\mathcal{N}1,{N_{n_\psi}}}^{\rm T}
		\end{matrix}\right]^{\rm T} \nonumber \\
		\Xi_{cl\mathcal{N}10} &= \left[\begin{matrix}
		D_{cl\mathcal{N}10,1}^{\rm T} & \ldots & D_{cl\mathcal{N}10,{N_{n_\psi}}}^{\rm T}
		\end{matrix}\right]^{\rm T} \nonumber \\
		\Xi_{cl\mathcal{N}11} &= \left[\begin{matrix}
		D_{cl\mathcal{N}11,1}^{\rm T} & \ldots & D_{cl\mathcal{N}11,{N_{n_\psi}}}^{\rm T}
		\end{matrix}\right]^{\rm T} \nonumber \\
		\Xi_{cl\mathcal{N}12} &= \left[\begin{matrix}
		D_{cl\mathcal{N}12,1}^{\rm T} & \ldots & D_{cl\mathcal{N}12,{N_{n_\psi}}}^{\rm T}
		\end{matrix}\right]^{\rm T} \nonumber \\
		\Gamma &= \text{diag}\{X_{m_1},\ldots,X_{m_s},\chi_{s+1},\ldots,\chi_{n_q}\} \nonumber \\
		\Lambda &= \text{diag}\left\{\lambda_1^{-1},\ldots,\lambda_{N_{n_\psi}}^{-1}\right\}. \nonumber
		\end{align*}
	\end{enumerate}
\end{Lem}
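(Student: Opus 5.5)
The plan is to derive (ii) from a dissipation inequality of the form in Lemma~\ref{Lem:1}, built on the storage function $V(x_{cl})=x_{cl}^{\rm T}Px_{cl}$ for the augmented closed-loop system \eqref{eq:16}, with the IQC terms weighted by the scalings $\Gamma$ and $\lambda_l$. The direction (ii)$\Rightarrow$(i) is the substantive one. For the converse, (i)$\Rightarrow$(ii), I would argue that, within the class of certificates considered (quadratic storage plus the fixed multiplier structure with free scalings $\Gamma,\lambda_l,\gamma$), robust performance certified by such a dissipation inequality is equivalent to its strict-LMI form. Reversing the Schur-complement steps below then recovers \eqref{eq:18}, so the ``equivalence'' is read relative to the chosen IQC certificate.

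For (ii)$\Rightarrow$(i), the first step is to write the candidate inequality
\begin{align*}
\dot V+\sum_{k}\Big(z_{\Delta1,k}^{\rm T}\Gamma_k z_{\Delta1,k}-p_k^{\rm T}\Gamma_k p_k\Big)
+\sum_{l}\lambda_l\Big(z_{\mathcal{N}1,l}^{\rm T}z_{\mathcal{N}1,l}-w^{\rm T}w\Big)
<\gamma d^{\rm T}d-\tfrac{1}{\gamma}e^{\rm T}e .
\end{align*}
This inequality must hold for all nonzero $(x_{cl},p,w,d)$. The IQC terms have this form because of the special structure of $\bar\Psi$ in \eqref{eq:13}--\eqref{eq:14}: the second output of each factor is the block output itself, $z_{\Delta2,k}=p$ and $z_{\mathcal{N}2,l}=w$, and $W=\mathrm{diag}(I,-I)$ scaled by $\Gamma_k$ or $\lambda_l$.

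The second step is to substitute $z_{\Delta1,k}$, $z_{\mathcal{N}1,l}$, $e$ and $\dot x_{cl}$ using the closed-loop matrices \eqref{eq:17}. This turns the inequality into a quadratic form in $\xi=[x_{cl}^{\rm T}\ p^{\rm T}\ w^{\rm T}\ d^{\rm T}]^{\rm T}$. The negative diagonal blocks $-\Gamma$, $-\sum_l\lambda_l I_{n_u}$ and $-\gamma I$ come from the $-p^{\rm T}\Gamma p$, $-\sum_l\lambda_l w^{\rm T}w$ and $-\gamma d^{\rm T}d$ terms. The positive terms $\Theta^{\rm T}\Gamma\Theta$, $\Xi^{\rm T}\Lambda^{-1}\Xi$ and $\gamma^{-1}C^{\rm T}C$ are then pulled out by three successive Schur complements, producing the last three block rows of \eqref{eq:18} with diagonal entries $-\Gamma^{-1}$, $-\Lambda$ and $-\gamma I$. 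Because every step is an exact Schur complement, negativity of the quadratic form is equivalent to \eqref{eq:18}.

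The third step is to integrate from $0$ to $T$ with zero initial conditions. Each IQC term integrates to a nonnegative quantity, for two reasons:
\begin{itemize}
\item the $\Delta_k$ multipliers \eqref{eq:10}--\eqref{eq:11} are static, so their integrals are nonnegative directly from the norm bound $b$;
\item the factorizations of $\Pi_{\bar P}$, $\Pi_{\bar{ZF}}$ and $\Pi_{\bar S}$ are hard, by Lemma~\ref{Lem:2}.
\end{itemize}
Dropping these terms gives $V(T)+\gamma^{-1}\|e\|_{2,T}^2\le\gamma\|d\|_{2,T}^2$. Letting $T\to\infty$ yields the $\mathcal{L}_2$ gain bound $\gamma$, and Lemma~\ref{Lem:1} yields robust stability. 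Well-posedness follows from the $(3,3)$ block and Assumption~\ref{Asu:1}.

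I expect the main obstacle to be the hardness of the combined IQC for the dead-zone after the conversion from $\Psi$ to $\bar\Psi$. The transformation $\Psi\mapsto\bar\Psi$ involves $\Psi_{22}^{-1}$, and I must check that it preserves hardness of each factor and stability of the realization \eqref{eq:14}. I would first try to show that $\bar\Psi=T\Psi$ with $T$ bistable and $W$-preserving up to the congruence used in \cite{CF2016}. If that fails, I would instead assume the hard property directly for the $\bar\Psi$ form, as is done in \cite{CF2016,CF2017}.
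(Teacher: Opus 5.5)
Your route is the paper's route: quadratic storage, the IQC-weighted dissipation inequality \eqref{eq:21}--\eqref{eq:22}, the Schur-complement passage between \eqref{eq:20} and \eqref{eq:18}, integration from zero initial conditions, and dropping $V(T)$ and the IQC integrals. As in the paper, only (ii)$\Rightarrow$(i) is obtained. Your certificate-relative converse is just the Schur equivalence of \eqref{eq:20} and \eqref{eq:18}, which the paper also records at the end; the literal converse (robust performance implies feasibility of \eqref{eq:18}) does not follow.

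The genuine gap is the step you flagged yourself, and the paper's proof has the same hole. The paper asserts nonnegativity ``from the definition of IQC'' and defers the conversion $\Psi\mapsto\bar\Psi$ to \cite{CF2016,CF2017}. Set $R\triangleq\begin{bmatrix} I & 0\\ \Psi_{21} & \Psi_{22}\end{bmatrix}$. Then $\bar\Psi=\Psi R^{-1}$, so $\bar\Psi$ driven by $(v,\tilde w)$, with $\tilde w\triangleq\Psi_{21}v+\Psi_{22}w$, produces exactly the $z$ that $\Psi$ produces from $(v,w)$. Hardness of $\Psi$ therefore certifies the $\bar\Psi$-IQC only for the transformed pair $(v,\tilde w)$. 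But \eqref{eq:14}, \eqref{eq:16} and your inequality drive $\bar\Psi$ with the physical pair $(v,w)$ and use $z_{\mathcal N2,l}=w$.

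This rules out both of your routes. Since $\bar\Psi^{\sim}W\bar\Psi=(R^{-1})^{\sim}\Pi R^{-1}\neq\Pi$ (not even up to a scalar), $T=\bar\Psi\Psi^{-1}$ can never be $W$-preserving, so the first route fails. The fallback is not a harmless assumption; it is false in general. For the Appendix factor $\bar\Psi_{\bar P}$ ($\alpha=1$, $v=\dot u+u$) one gets $z_{\mathcal N1}=1.98\dot u+0.0198u-0.9802w$. Ramp $u$ slowly to $10\bar u$ and hold it there; then $z_{\mathcal N1}^2-w^2$ settles near $-6.6\bar u^{2}$. Hence $\int_0^T(z_{\mathcal N1}^2-w^2)\,dt<0$ for large $T$, and the term cannot be dropped.

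A sound version must keep the hard factors themselves. That means using $z_{\mathcal N2,l}=\Psi_{21,l}v+\Psi_{22,l}w$, with $v$ from \eqref{eq:15} substituted. The negative terms then become $-\lambda_l\|z_{\mathcal N2,l}\|^2$ rather than $-\lambda_l w^{\rm T}w$, and the $w$-related blocks of \eqref{eq:18} must be re-derived. The $\Delta$-part of your argument is fine, because the factors in \eqref{eq:10}--\eqref{eq:11} already have second block row $[\,0\ \ I\,]$.

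Two smaller points, also inherited from the paper:
\begin{itemize}
\item Lemma~\ref{Lem:2} does not literally apply to $\Pi_{\bar P}$, $\Pi_{\bar S}$, $\Pi_{\bar{ZF}}$. Their $(1,1)$ entries equal $0.01/(\omega^2+\alpha^2)$ and vanish at $\omega=\infty$, so hardness needs a separate justification.
\item Well-posedness concerns the $p$-loop: $I-\Delta D_{01}$ must be invertible. This follows from the principal part of \eqref{eq:18} in block rows/columns $2,5,6,7$, not from the $(3,3)$ block. The signal $w=\mathcal N(u)$ closes no algebraic loop, since $u$ is a state.
\end{itemize}
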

\begin{proof}
	Considering a quadratic Lyapunov function $V(x_{cl}) = x_{cl}^{\rm T}Px_{cl}$ in terms of the augmented closed-loop system (\ref{eq:16}), and differentiating it with regard to time along the solution of system (\ref{eq:16}), then we have
	\begin{align}\label{eq:19}
	\dot{V} &= (A_{cl}x_{cl}+B_{cl0}p+B_{cl1}w+B_{cl2}d)^{\rm T}Px_{cl}\nonumber \\
	&+x_{cl}^{\rm T}P(A_{cl}x_{cl}+B_{cl0}p+B_{cl1}w+B_{cl2}d) \nonumber \\
	&= \left[\begin{matrix}
	x_{cl} \\
	p \\
	w \\
	d
	\end{matrix}\right]^{\rm T}\left[\begin{matrix}
	He\{PA_{cl}\} & \star & \star & \star \\
	B_{cl0}^{T} & 0 & \star & \star \\
	B_{cl1}^{T} & 0 & 0 & \star \\
	B_{cl2}^{T} & 0 & 0 & 0
	\end{matrix}\right]\left[\begin{matrix}
	x_{cl} \\
	p \\
	w \\
	d
	\end{matrix}\right].
	\end{align}
	Then given the inequality
	\begin{align}\label{eq:20}
	& \left[\begin{matrix}
	He\{PA_{cl}\} & \star & \star & \star \\
	B_{cl0}^{\rm T}P & -\Gamma & \star & \star \\
	B_{cl1}^{\rm T}P & 0 & -\sum_{i=1}^{N}\lambda_iI_{n_u} & \star \\
	B_{cl2}^{\rm T}P & 0 & 0 & -\gamma I_{n_d}
	\end{matrix}\right] \nonumber \\
	&+\sum_{k=1}^{s}\left[\begin{matrix}
	C_{cl\Delta1,k}^{\rm T} \\
	D_{cl\Delta10,k}^{\rm T} \\
	D_{cl\Delta11,k}^{\rm T} \\
	D_{cl\Delta12,k}^{\rm T}
	\end{matrix}\right] X_{m_k}\left[\begin{matrix}
	C_{cl\Delta1,k}^{\rm T}\\ D_{cl\Delta10,k}^{\rm T}\\ D_{cl\Delta11,k}^{\rm T}\\ D_{cl\Delta12,k}
	\end{matrix}\right]^{\rm T} \nonumber \\
	&+\sum_{k=s+1}^{N_q}\chi_k\left[\begin{matrix}
	C_{cl\Delta1,k}^{\rm T} \\
	D_{cl\Delta10,k}^{\rm T} \\
	D_{cl\Delta11,k}^{\rm T} \\
	D_{cl\Delta12,k}^{\rm T}
	\end{matrix}\right]\left[\begin{matrix}
	C_{cl\Delta1,k}^{\rm T}\\ D_{cl\Delta10,k}^{\rm T}\\ D_{cl\Delta11,k}^{\rm T}\\ D_{cl\Delta12,k}
	\end{matrix}\right]^{\rm T} \nonumber \\
	&+\sum_{l=1}^{N_{n_\psi}}\lambda_{l}\left[\begin{matrix}
	C_{cl\mathcal{N}1,l}^{\rm T} \\
	D_{cl\mathcal{N}10,l}^{\rm T} \\
	D_{cl\mathcal{N}11,l}^{\rm T} \\
	D_{cl\mathcal{N}12,l}^{\rm T}
	\end{matrix}\right]\left[\begin{matrix}
	C_{cl\mathcal{N}1,l}^{\rm T}\\ D_{cl\mathcal{N}10,l}^{\rm T}\\ D_{cl\mathcal{N}11,l}^{\rm T}\\ D_{cl\mathcal{N}12,l}
	\end{matrix}\right]^{\rm T} \nonumber \\
	&+ \frac{1}{\gamma}\left[\begin{matrix}
	C_{cl2}^{\rm T} \\
	D_{cl20}^{\rm T} \\
	D_{cl21}^{\rm T} \\
	D_{cl22}^{\rm T}
	\end{matrix}\right]\left[\begin{matrix}
	C_{cl2} & D_{cl20} & D_{cl21} & D_{cl22}
	\end{matrix}\right]<0 
	\end{align}
	multiplying vector $[x_{cl}^{\rm T}~p^{\rm T}~w^{\rm T}~d^{\rm T}]^{\rm T}$ from the right side of condition (\ref{eq:20}) and its transpose to the left side, and based on the relationship $p^{\rm T}\Gamma p=\sum_{k=1}^{s}p_k^{\rm T}X_{m_k}p_k+\sum_{k=s+1}^{n_q}\chi_k p_k^{\rm T}p_k$, we obtain
	\begin{align}\label{eq:21}
	&\dot{V} - \sum_{k=1}^{s} p_k^{\rm T}X_{m_k}p_k- \sum_{k=s+1}^{n_q}\chi_k p_k^{\rm T}p_k-\sum_{l=1}^{N_{n_\psi}}\lambda_{l}w^{\rm T}w -\gamma d^{\rm T}d\nonumber \\
	&+\sum_{k=1}^{s}z_{\Delta 1,k}^{\rm T}X_{m_k}z_{\Delta 1,k}+\sum_{k=s+1}^{n_q}\chi_k z_{\Delta 1,k}^{\rm T}z_{\Delta 1,k} \nonumber \\
	&+\sum_{l=1}^{{N_{n_\psi}}}\lambda_lz_{\mathcal{N}1,l}^{\rm T}z_{\mathcal{N}1,l}+\frac{1}{\gamma}
	e^{\rm T}e<0.
	\end{align}
	
	According to the conclusion of $J$-spectral factorization, we have $W_{\Delta, k}=\text{diag}[X_{m_k}, -X_{m_k}]$ for $k\in\textbf{I}[1,s]$ and $W_{\Delta, k}=\text{diag}[\chi_k, -\chi_k]$ for all $k\in\textbf{I}[s+1,n_q]$ and $W_{\mathcal{N},l}=\text{diag}[\lambda_l, -\lambda_l]$ for all $l\in\textbf{I}[1,N_{n_\psi}]$. After simple transformation, the inequality (\ref{eq:18}) is rewritten as:
	\begin{align}\label{eq:22}
	\dot{V}+\sum_{k=1}^{n_q}z_{\Delta, k}^{\rm T}W_{\Delta, k}z_{\Delta, k}+\sum_{l=1}^{N_{n_\psi}}z_{\mathcal{N},l}^{\rm T}W_{\mathcal{N},l}z_{\mathcal{N},l}  <\gamma d^{\rm T}d-\frac{1}{\gamma}e^{\rm T}e.
	\end{align}
	By integrating both sides of the above performance evaluated inequality from $t=0$ to $t=T$ with zero initial conditions, then we have
	\begin{align}\label{eq:23}
	V(x_{cl}(T)) &+ \sum_{k=1}^{n_q}\int_{0}^{T}z_{\Delta, k}^{\rm T}(t)W_{\Delta, k}z_{\Delta, k}(t)dt \nonumber \\
	&+\sum_{l=1}^{N_{n_\psi}}\int_{0}^{T}z_{\mathcal{N},l}^{\rm T}(t)W_{\mathcal{N},l}z_{\mathcal{N},l}(t)dt \nonumber \\
	&< \gamma\int_{0}^{T} d^{\rm T}(t)d(t)dt - \frac{1}{\gamma} \int_{0}^{T}e^{\rm T}(t)e(t)dt.
	\end{align}
	From the definition of IQC and the non-negativity of the Lyapunov function $V$, it can be clearly seen that these three terms on the left of above inequality are all positive, and (\ref{eq:23}) becomes
	\begin{align}\label{eq:24}
	\frac{1}{\gamma} \int_{0}^{T} e^{\rm T}(t)e(t)dt<\gamma\int_{0}^{T}d^{\rm T}(t)d(t)dt.
	\end{align}
	Therefore, the augmented closed-loop system (\ref{eq:16}) with input saturation is stable and performance inequality $\|e\|_2<\gamma\|d\|_2$ holds for all bounded, time-varying structured uncertainties $\Delta\in\bm{\Delta}$ and dead-zone nonlinearity.
	
	Furthermore, taking the inequality (\ref{eq:20}) into consideration and using Schur complement, we have the matrix inequality (\ref{eq:18}) with the existence of both pairs $(X_{m_k},X_{m_k}^{-1})$ for $k\in\textbf{I}[1,s]$ and $(\chi_k,\chi_k^{-1})$ for $k\in\textbf{I}[s+1,n_q]$ as well as $(\lambda_l,\lambda_l^{-1})$ for $l\in\textbf{I}[1,N_{n_\psi}]$.
\end{proof}

Two facts need to be emphasized here. First, we obtain the scaled bounded real lemma within the framework of integral quadratic constraints, which ensures the robust stabilization of closed-loop system (\ref{eq:16}) with bounded, time-varying structured uncertainties and dead-zone nonlinearity, also minimizes the $\mathcal{L}_2$ gain performance $\gamma$. Unlike the proof process previous literature  referred to \cite{CF2016}, namely, introducing a commutable scaling matrix set $\bm{D}$ such that $\Gamma\in\bm{D}$ makes the inequality $q^{\rm T}\Gamma^{1/2}(I-\Delta^{\rm T}\Delta)\Gamma^{1/2}q\ge 0$ hold, while in our work the similar results about scaled bounded real lemma are derived using IQC theory. The main diagonal elements of matrix (\ref{eq:18}) keep consistent, but due to the different realization of closed-loop systems, the explicit form of (\ref{eq:18}) slightly changed. Second, to capture the input-output behavior of each time-varying structured uncertainty $\Delta\in\bm{\Delta}$, we employ a static IQC to realize sBRL (\ref{eq:18}) where scaling matrix is constructed by $\Gamma=\text{diag}\{X_{m_1},\ldots,X_{m_s},\chi_{s+1},\ldots,\chi_{n_q}\}\in\mathbb{S}_{+}^{n_q}$.

The next objective is to derive an $\mathcal{H}_\infty$ controller synthesis condition based on Lemma \ref{Lem:3}. After a congruent transformation and an inequality relaxation, the synthesis condition can be expressed as a set of LMIs, as stated below.
\begin{table*}[htbp]
	\begin{eqnarray}\label{eq:25}
	\left[\begin{matrix}
	He\left\{\left[\begin{matrix}
	A & B_0 & 0\\
	0 & -\alpha I_{n_u} & 0 \\
	0 & 0 & A_{\mathcal{N}\bar{\Psi}}
	\end{matrix}\right]Q+\left[\begin{matrix}
	0 \\
	I_{n_u}\\
	B_{\bar{\Psi}1}
	\end{matrix}\right]\hat{F}_c\right\} & \star & \star & \star & \star & \star & \star\\
	\hat{\Gamma}^{\rm T}\left[\begin{matrix}
	B_1^{T} & 0 & 0
	\end{matrix}\right] & -\hat{\Gamma} & \star & \star & \star & \star & \star\\
	\hat{\lambda}\left[\begin{matrix}
	-B_0^{\rm T} & 0 & B_{\bar{\Psi}2}^{\rm T}
	\end{matrix}\right]+\hat{H}_c^{\rm T}\left[\begin{matrix}
	0 & I & B_{\bar{\Psi}1}^{\rm T}
	\end{matrix}\right] & 0 & \sum_{l=1}^{N_{n_\psi}}(\hat{\lambda}_l-2\hat{\lambda})I_{n_u} & \star & \star & \star & \star\\
	\left[\begin{matrix}
	B_2^{\rm T} & 0 & 0
	\end{matrix}\right] & 0 & 0 & -\gamma I_{n_d} & \star & \star & \star\\
	\Omega_{51} & \Omega_{52} & \Omega_{53} & \Omega_{54} & -\hat{\Gamma} & \star & \star\\
	\Omega_{61} & 0 & \Omega_{63} & 0 & 0 & -\Lambda
	& \star\\
	\left[\begin{matrix}
	C_1 & D_{10} & 0
	\end{matrix}\right]Q & D_{11}\hat{\Gamma} & -D_{10}\hat{\lambda} & D_{12} & 0 & 0 & -\gamma I_{n_e}
	\end{matrix}\right]<0
	%He\left\{\left[\begin{array}{cccc}
	%          A & B_0 & 0 & 0\\
	%          0 & -\alpha & 0 & 0 \\
	%          0 & 0 & A_{\mathcal{N}\Psi} & 0 \\
	%          B_{\Delta\Psi1}C_0 & B_{\Delta\Psi1}D_{00} & 0 & A_{\Delta\Psi}
	%        \end{array}\right]Q+\left[\begin{array}{c}
	%          0 \\
	%          1 \\
	%          B_{\Psi1} \\
	%          0
	%        \end{array}\right]\hat{F}_c\right\}+2\sigma Q<0 \nonumber \\
	%        \left[\begin{array}{cc}
	%          \sin{\theta}He\left\{\left[\begin{array}{cccc}
	%          A & B_0 & 0 & 0\\
	%          0 & -\alpha & 0 & 0 \\
	%          0 & 0 & A_{\mathcal{N}\Psi} & 0 \\
	%          B_{\Delta\Psi1}C_0 & B_{\Delta\Psi1}D_{00} & 0 & A_{\Delta\Psi}
	%        \end{array}\right]Q+\left[\begin{array}{c}
	%          0 \\
	%          1 \\
	%          B_{\Psi1} \\
	%          0
	%        \end{array}\right]\hat{F}_c\right\} & * \\
	%        \cos{\theta}\left[\right] & \sin{\theta}He\left\{\left[\begin{array}{cccc}
	%          A & B_0 & 0 & 0\\
	%          0 & -\alpha & 0 & 0 \\
	%          0 & 0 & A_{\mathcal{N}\Psi} & 0 \\
	%          B_{\Delta\Psi1}C_0 & B_{\Delta\Psi1}D_{00} & 0 & A_{\Delta\Psi}
	%        \end{array}\right]Q+\left[\begin{array}{c}
	%          0 \\
	%          1 \\
	%          B_{\Psi1} \\
	%          0
	%        \end{array}\right]\hat{F}_c\right\}
	%        \end{array}\right]
	\end{eqnarray}
\end{table*}

\begin{Thm}\label{Thm:1}
	Consider the uncertain LFT systems with input saturation (\ref{eq:1}). If there exist positive-definite matrices $Q=P^{-1}\in\mathbb{S}_+^{n_x+n_u+n_\psi}$,
	$\hat\Gamma=\text{diag}[X_{m_1}^{-1},\ldots,X_{m_s}^{-1},\chi_{s+1}^{-1},\ldots,\chi_{n_q}^{-1}]\in\mathbb{S}_+^{n_q}$, positive scaling factors $\hat{\lambda}_{l}=\lambda_{l}^{-1}\in\mathbb{R}_+$ for all $l\in\textbf{I}[1,N_{n_\psi}]$, $\hat{\lambda}\in\mathbb{R}_+$, $\Lambda = \text{diag}\left\{\lambda_1^{-1},\ldots,\lambda_{N_{n_\psi}}^{-1}\right\}$, rectangular matrices
	$\hat{F}_c=F_cQ\in\mathbb{R}^{n_u\times(n_x+n_u+n_\psi)}$, $\hat{H}_c=H_c\hat{\lambda}\in\mathbb{R}^{n_u\times n_u}$ and a positive scalar $\gamma\in\mathbb{R}_+$ such that LMI condition (\ref{eq:25}) holds with
	\begin{align} \label{eq:26}
	\Omega_{51} &= \left[\begin{array}{c}
	\left[\begin{matrix}
	D_{\Delta\bar{\Psi}1,1}C_0 & D_{\Delta\bar{\Psi}1,1}D_{00} & 0
	\end{matrix}\right]Q \\
	\vdots \\
	\left[\begin{matrix}
	D_{\Delta\bar{\Psi}1,n_q}C_0 & D_{\Delta\bar{\Psi}1,n_q}D_{00} & 0
	\end{matrix}\right]Q
	\end{array}\right] \nonumber \\
	\Omega_{52} &= \left[\begin{array}{c}
	\left(
	D_{\Delta\bar{\Psi}1,1}D_{01}+D_{\Delta\bar{\Psi}2,1}
	\right)\hat{\Gamma} \\
	\vdots \\
	\left(
	D_{\Delta\bar{\Psi}1,n_q}D_{01}+D_{\Delta\bar{\Psi}2,n_q}
	\right)\hat{\Gamma}
	\end{array}\right] \nonumber \\
	\Omega_{53} &= \left[\begin{matrix}
	-D_{\Delta\bar{\Psi}1,1}D_{00}\hat{\lambda} \\
	\vdots \\
	-D_{\Delta\bar{\Psi}1,n_q}D_{00}\hat{\lambda}
	\end{matrix}\right] \nonumber \\
	\Omega_{54} &= \left[\begin{matrix}
	D_{\Delta\bar{\Psi}1,1}D_{02} \\
	\vdots \\
	D_{\Delta\bar{\Psi}1,n_q}D_{02}
	\end{matrix}\right], \nonumber \\
	\Omega_{61} &= \left[\begin{array}{c}
	\left[\begin{matrix}
	0 & 0 & C_{\mathcal{N}\bar{\Psi},1}
	\end{matrix}\right]Q+D_{\mathcal{N}\bar{\Psi}1,1}\hat{F}_c \\
	\vdots  \\
	\left[\begin{matrix}
	0 & 0 & C_{\mathcal{N}\bar{\Psi},{N_{n_\psi}}}
	\end{matrix}\right]Q+D_{\mathcal{N}\bar{\Psi}1,{N_{n_\psi}}}\hat{F}_c
	\end{array}\right] \nonumber \\
	\Omega_{63} &= \left[\begin{array}{c}
	D_{\mathcal{N}\bar{\Psi}1,1}\hat{H}_c+D_{\mathcal{N}\bar{\Psi}2,1}\hat{\lambda} \\
	\vdots \\
	D_{\mathcal{N}\bar{\Psi}1,{N_{n_\psi}}}\hat{H}_c+D_{\mathcal{N}\bar{\Psi}2,{N_{n_\psi}}}\hat{\lambda}
	\end{array}\right].
	\end{align}
	Then the uncertain LFT system (\ref{eq:1}) with time-varying parameters and input saturation is stabilized by the state-feedback $\mathcal{H}_\infty$ controller (\ref{eq:15}) and the performance level of its $\mathcal{L}_2$
	gain is no larger than $\gamma$ for all uncertainties $\Delta\in\bm\Delta$ and nonlinearity.
\end{Thm}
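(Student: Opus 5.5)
The plan is to show that the LMI (\ref{eq:25}) implies the matrix inequality (\ref{eq:18}) of Lemma~\ref{Lem:3} for the closed-loop matrices (\ref{eq:17}), with $P=Q^{-1}$, $F_c=\hat F_cQ^{-1}$ and $H_c=\hat H_c\hat\lambda^{-1}$. The desired conclusion then follows from the direction ii)$\Rightarrow$i) of Lemma~\ref{Lem:3}. Recall that this direction integrates the dissipation inequality (\ref{eq:22}) and uses the hard IQC factorizations of $\Pi_{\Delta_k}$ and $\Pi_{\mathcal N}$ (Lemma~\ref{Lem:2}) to discard the IQC terms.

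\textbf{Step 1: congruence.} Starting from (\ref{eq:18}), I apply the congruence transformation
\[
T=\mathrm{diag}\{Q,\ \hat\Gamma,\ \hat\lambda I_{n_u},\ I_{n_d},\ I,\ I,\ I_{n_e}\},\qquad \hat\Gamma=\Gamma^{-1}.
\]
The $(1,1)$ block becomes $He\{A_{cl}Q\}$, and substituting (\ref{eq:17}) gives $He\{\mathrm{diag}(\cdot)Q+[0;I;B_{\mathcal N\bar\Psi1}]\hat F_c\}$. The $(2,1)$ block becomes $\hat\Gamma B_{cl0}^{\rm T}$ and the $(2,2)$ block becomes $-\hat\Gamma\Gamma\hat\Gamma=-\hat\Gamma$. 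The $(3,1)$ block becomes $\hat\lambda B_{cl1}^{\rm T}$, which equals $\hat\lambda[-B_0^{\rm T}\ 0\ B^{\rm T}_{\mathcal N\bar\Psi2}]+\hat H_c^{\rm T}[0\ I\ B^{\rm T}_{\mathcal N\bar\Psi1}]$.

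The rows for $\Theta$, $\Xi$ and $C_{cl2}$, multiplied on the right by $T$, reproduce $\Omega_{51},\dots,\Omega_{63}$ in (\ref{eq:26}) and the last block row of (\ref{eq:25}). In these rows $F_cQ=\hat F_c$ and $H_c\hat\lambda=\hat H_c$ linearize the terms. The $(5,5)$ block is $-\Gamma^{-1}=-\hat\Gamma$, which matches, and the $(6,6)$ block is $-\Lambda$. All entries are now affine in $(Q,\hat\Gamma,\hat\lambda,\hat F_c,\hat H_c,\gamma,\hat\lambda_l)$, except the $(3,3)$ block. That block is
\[
-\hat\lambda^2\sum_l\lambda_l=-\hat\lambda^2\sum_l\hat\lambda_l^{-1}.
\]

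\textbf{Step 2: relaxation of the $(3,3)$ block.} This is the main obstacle. I would use, for each $l$, the elementary inequality $(\hat\lambda-\hat\lambda_l)^2\hat\lambda_l^{-1}\ge 0$, that is,
\[
-\hat\lambda^2\hat\lambda_l^{-1}\le \hat\lambda_l-2\hat\lambda .
\]
Summing over $l$ shows that the $(3,3)$ entry of the transformed (\ref{eq:18}) is bounded above by $\sum_l(\hat\lambda_l-2\hat\lambda)I_{n_u}$, which is exactly the $(3,3)$ entry of (\ref{eq:25}). Since the two matrices differ only in this diagonal block, by a negative-semidefinite amount, (\ref{eq:25})$<0$ implies the transformed (\ref{eq:18})$<0$. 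Congruence then gives (\ref{eq:18})$<0$ itself.

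Care is needed to check that the scalings on the $w$ channel in the $\Omega_{53}$ and $\Omega_{63}$ rows and in the $-D_{10}\hat\lambda$ entry are consistent with the single scalar $\hat\lambda$ used for the third column. A similar check applies to $\Gamma$ in the $\Omega_{52}$ and $D_{11}\hat\Gamma$ entries. I expect these to match column-wise because $T$ is block diagonal.

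\textbf{Step 3: conclude.} With $P=Q^{-1}>0$, Lemma~\ref{Lem:3} ii)$\Rightarrow$i) yields robust stability of (\ref{eq:16}) and $\|e\|_2<\gamma\|d\|_2$. Because (\ref{eq:16}) is just (\ref{eq:12}) in closed loop with (\ref{eq:15}), and (\ref{eq:12}) is equivalent to (\ref{eq:1}) via the loop transformation, this gives the same guarantee for the original saturated system (\ref{eq:1}). Physical realizability of the loop transformation is ensured by Assumption~\ref{Asu:1}.
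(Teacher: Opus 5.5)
Your proposal is correct and is essentially the paper's proof. It uses the same congruence by $\mathrm{diag}\{Q,\hat\Gamma,\hat\lambda I_{n_u},I,\ldots,I\}$ and the same completion-of-squares bound: your $-\hat\lambda^{2}\hat\lambda_l^{-1}\le\hat\lambda_l-2\hat\lambda$ is the paper's $-\hat\lambda_l^{-1}\le\hat\lambda^{-2}\hat\lambda_l-2\hat\lambda^{-1}$ multiplied by $\hat\lambda^{2}$, followed by Lemma~\ref{Lem:3}. The only difference is that the paper relaxes the $(3,3)$ block before the congruence rather than after. Your explicit statement that the transformed (\ref{eq:18}) is bounded above by (\ref{eq:25}) gets the direction and sign of that block right, whereas the paper's intermediate displays (\ref{eq:27})--(\ref{eq:28}) carry a spurious minus sign there.
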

\begin{proof}
	From Lemma \ref{Lem:3}, it can be seen that (\ref{eq:18}) is a non-convex condition due to the sum of scaling factors, i.e., $\sum_{l=1}^{N_{n_\psi}}\lambda_l$. To tackle this problem, we define $\hat{\lambda}_l=\lambda_l^{-1}$ for all $l\in\textbf{I}[1,N_{n_\psi}]$, and according to the inequality relationship that
	$-\hat{\lambda}_l^{-1}\le \hat{\lambda}^{-2}\hat{\lambda}_l-2\hat{\lambda}^{-1}$, the inequality (\ref{eq:18}) becomes
	\begin{align}\label{eq:27}
	\left[\begin{matrix}
	He\left\{PA_{cl}\right\} & \star & \star  \\
	B_{cl0}^{\rm T}P & -\Gamma & \star  \\
	B_{cl1}^{\rm T}P & 0 & -\sum_{l=1}^{N_{n_\psi}}(\hat{\lambda}^{-2}\hat{\lambda}_l-2\hat{\lambda}^{-1})I_{n_u}  \\
	B_{cl2}^{\rm T}P & 0 & 0   \\
	\Theta_{cl\Delta 1} & \Theta_{cl\Delta 10} & \Theta_{cl\Delta 11}\\
	\Xi_{cl\mathcal{N}1} & \Xi_{cl\mathcal{N}10} & \Xi_{cl\mathcal{N}11}  \\
	C_{cl2} & D_{cl20} & D_{cl21}
	\end{matrix}\right. \nonumber \\
	\left.\begin{matrix}
	\star & \star & \star & \star \\
	\star & \star & \star & \star \\
	\star & \star & \star & \star \\
	-\gamma I_{n_d}  & \star & \star & \star \\
	\Theta_{cl\Delta12} & -\Gamma^{-1} & \star & \star \\
	\Xi_{cl\mathcal{N}12} & 0 & -\Lambda & \star \\
	D_{cl22} & 0 & 0 & -\gamma I_{n_e}
	\end{matrix}\right]<0.
	\end{align}
	Then, by multiplying diagonal matrix $\text{diag}\{P^{-1},\Gamma^{-1},\hat\lambda,I_{n_d},\notag\\I_{n_{\Delta z},1},\ldots,I_{n_{\Delta z},n_q},I_{n_{\mathcal{N}z},1},\ldots,I_{n_{\mathcal{N}z},N_{n_\psi}},I_{n_e}\}$ to the right and its transpose from the left on both sides of inequality (\ref{eq:27}), we obtain the following results
	\begin{align}\label{eq:28}
	\left[\begin{matrix}
	P^{-{\rm T}}A_{cl}^{\rm T}+A_{cl}P^{-1} & \star & \star  \\
	\Gamma^{-1}B_{cl0}^{\rm T} & -\Gamma^{-1} & \star  \\
	\hat{\lambda}B_{cl1}^{\rm T} & 0 & -\sum_{l=1}^{N_{n_\psi}}(\hat{\lambda}_l-2\hat{\lambda})I_{n_u}  \\
	B_{cl2}^{\rm T} & 0 & 0   \\
	\Theta_{cl\Delta 1}P^{-1} & \Theta_{cl\Delta 10}\Gamma^{-1} & \Theta_{cl\Delta 11}\hat{\lambda}\\
	\Xi_{cl\mathcal{N}1}P^{-1} & \Xi_{cl\mathcal{N}10}\Gamma^{-1} & \Xi_{cl\mathcal{N}11}\hat{\lambda}  \\
	C_{cl2}P^{-1} & D_{cl20}\Gamma^{-1} & D_{cl21}\hat{\lambda}
	\end{matrix}\right. \nonumber \\
	\left.\begin{matrix}
	\star & \star & \star & \star \\
	\star & \star & \star & \star \\
	\star & \star & \star & \star \\
	-\gamma I_{n_d} & \star & \star & \star \\
	\Theta_{cl\Delta12} & -\Gamma^{-1} & \star & \star \\
	\Xi_{cl\mathcal{N}12} & 0 & -\Lambda & \star \\
	D_{cl22} & 0 & 0 & -\gamma I_{n_e}
	\end{matrix}\right]<0. \nonumber \\
	&\hspace*{-1.5ex} \hspace*{-1.5ex}&
	\end{align}
	Specifically, denoting $Q=P^{-1}$, $\hat{\Gamma}= \Gamma^{-1}$, $\hat{F}_c=F_cQ$, and $\hat{H}_c=\hat{\lambda}H_c$, and substituting the associated closed-loop system matrices, we obtain condition (\ref{eq:25}) eventually.
\end{proof}

The synthesis condition \eqref{eq:25} is given in terms of LMIs and can therefore be solved efficiently by standard interior-point algorithms. The feasibility problem associated with \eqref{eq:25} can be cast as the following convex optimization problem:
	\begin{align}\label{eq:29}
	\begin{array}{cc}
	\min\limits_{Q,D_{m_{k_1}},\hat{\chi}_{k_2},\hat{\lambda}_{l},\hat{F}_c,\hat{H}_c, \forall{k_1}\in\textbf{I}[1,s],\forall{k_2}\in\textbf{I}[s+1,n_q],\forall{l}\in\textbf{I}[1,N_{n_\psi}]} & \gamma \\
	\text{s.t.} \ \eqref{eq:25}.
	\end{array}
	\end{align}

Note that the inequality relaxation
\[
-\hat{\lambda}_l^{-1}\le \hat{\lambda}^{-2}\hat{\lambda}_l-2\hat{\lambda}^{-1}, \qquad \forall l\in\mathbf{I}[1,N_{n_\psi}],
\]
is introduced when converting the synthesis condition into LMIs. As pointed out in \cite{CF2016,CF2017}, the LMI condition \eqref{eq:25} is therefore only sufficient for robust $\mathcal{H}_\infty$ controller synthesis, and some conservatism may be introduced.

\section{Numerical Examples}

\subsection{A Second-Order LFT Uncertain System}

In this subsection, a second-order uncertain LFT system is used to illustrate the effectiveness of the proposed robust $\mathcal{H}_\infty$ state-feedback controller. Let us consider a second-order uncertain LFT system with input saturation
\begin{align}\label{eq:32}
\dot{x} &= \left[\begin{matrix}
0 & 1 \\
-10 & -8
\end{matrix}\right]x + \left[\begin{matrix}
0 \\
1
\end{matrix}\right]p + \left[\begin{matrix}
0 \\
-1
\end{matrix}\right]d + \left[\begin{matrix}
0 \\
0.1
\end{matrix}\right]Sat(u) \nonumber \\
q &= \left[\begin{matrix}
2 & -1
\end{matrix}\right]x + d + 0.3Sat(u)\nonumber \\
e &= \left[\begin{matrix}
-1 & 1
\end{matrix}\right]x  + p+ 0.5d+0.1Sat(u)\nonumber \\
p   &= \Delta q
\end{align}
where $Sat(u)$ denotes the input saturation nonlinearity. By introducing the auxiliary dynamics $\dot{u}=-\alpha u+v$ and the dead-zone function $w=u-Sat(u)$, the system \eqref{eq:32} can be rewritten as
\begin{align}\label{eq:33}
\left[\begin{matrix}
\dot{x} \\
\dot{u} \\
q \\
e
\end{matrix}\right]&=\left[\begin{array}{ccc:cccc}
0 & 1 & 0 & 0 & 0 & 0 & 0 \\
-10 & -8 & 0.1 & 1 & -0.1 & -1 & 0 \\
0 & 0 & -\alpha & 0 & 0 & 0 & 1\\
\hdashline
2 & -1 & 0.3 & 0 & -0.3 & 1 & 0\\
-1 & 1 & 0.1 & 1 & -0.1 & 0.5 & 0
\end{array}\right]\left[\begin{matrix}
x \\
u \\
p \\
w \\
d \\
v
\end{matrix}\right] \nonumber \\
p &= \Delta q \nonumber \\
w &= \mathcal{N}(u) = u-Sat(u)
\end{align}
where
\begin{align}\label{eq:34}
A &= \left[\begin{matrix}
0 & 1 & 0 \\
-10 & -8 & 0.1 \\
0 & 0 & -\alpha
\end{matrix}\right], B_0 = \left[\begin{matrix}
0 \\
0.1 \\
0
\end{matrix}\right], B_1 = \left[\begin{matrix}
0\\
1 \\
0
\end{matrix}\right] \nonumber \\
B_2 &= \left[\begin{matrix}
0 \\
-1 \\
0
\end{matrix}\right],C_0 = \left[\begin{matrix}
2 & -1 & 0.3 \\
-1 & 1 & 0.1
\end{matrix}\right],\nonumber \\
D_{00}&=0.3,D_{01}=0,D_{02}=1,\nonumber \\
D_{10}&=0.1,D_{11}=1,D_{12}=0.5.
\end{align}
As discussed previously, the Popov IQC \eqref{eq:7}, the Zames--Falb IQC \eqref{eq:9}, and the sector-bound IQC \eqref{eq:8}, together with the scaling factors $\lambda_1,\lambda_2,\lambda_3$, are used to characterize the dead-zone nonlinearity $\mathcal{N}(u)$, namely,
$\Pi_{\mathcal{N}}=\lambda_1\Pi_{\bar{P}}+\lambda_2\Pi_{\bar{ZF}}+\lambda_3\Pi_{\bar{S}}$. After $J$-spectral factorization, the associated IQC-induced dynamics are described in the state-space form
\begin{eqnarray}
A_{\mathcal{N}\bar{\Psi}}=\left[\begin{array}{ccc}
A_{\mathcal{N}\bar{\Psi}_{\bar{P}}} & 0 & 0 \\
0 & A_{\mathcal{N}\bar{\Psi}_{\bar{ZF}}} & 0 \\
0 & 0 & A_{\mathcal{N}\bar{\Psi}_{\bar{S}}}
\end{array}\right], \nonumber \\
B_{\mathcal{N}\bar{\Psi}1} = \left[\begin{array}{c}
B_{\mathcal{N}\bar{\Psi}_{\bar{P}}1} \\
B_{\mathcal{N}\bar{\Psi}_{\bar{ZF}}1} \\
B_{\mathcal{N}\bar{\Psi}_{\bar{S}}1}
\end{array}\right], B_{\mathcal{N}\Psi2} = \left[\begin{array}{c}
B_{\mathcal{N}\bar{\Psi}_{\bar{P}}2} \\
B_{\mathcal{N}\bar{\Psi}_{\bar{ZF}}2} \\
B_{\mathcal{N}\bar{\Psi}_{\bar{S}}2}
\end{array}\right], \nonumber \\
C_{\mathcal{N}\bar{\Psi}}=\left[\begin{array}{ccc}
C_{\mathcal{N}\bar{\Psi}_{\bar{P}}} & 0 & 0 \\
0 & C_{\mathcal{N}\bar{\Psi}_{\bar{ZF}}} & 0 \\
0 & 0 & C_{\mathcal{N}\bar{\Psi}_{\bar{S}}}
\end{array}\right],  \nonumber \\
D_{\mathcal{N}\bar{\Psi}1} = \left[\begin{array}{c}
D_{\mathcal{N}\bar{\Psi}_{\bar{P}}1} \\
D_{\mathcal{N}\bar{\Psi}_{\bar{ZF}}1} \\
D_{\mathcal{N}\bar{\Psi}_{\bar{S}}1}
\end{array}\right], D_{\mathcal{N}\bar{\Psi}2} = \left[\begin{array}{c}
D_{\mathcal{N}\bar{\Psi}_{\bar{P}}2} \\
D_{\mathcal{N}\bar{\Psi}_{\bar{ZF}}2} \\
D_{\mathcal{N}\bar{\Psi}_{\bar{S}}2}
\end{array}\right].
\end{eqnarray}
Detailed $J$-spectral factorization is given in the Appendix A. On the other hand, we take $n_q=1$ for simplicity, and the uncertainty $\Delta$ is bounded by $1$, i.e., $|\Delta|\le1$. The IQC is chosen by (\ref{eq:11}) with $b=1$. $\Psi_{\Delta}=\left[\begin{array}{cc}
b & 0 \\
0 & 1
\end{array}\right]$ also can be expressed in the static state-space form. Note that $\Psi_{\Delta}$ is a constant matrix, then after state-space realization, $A_{\Delta\bar{\Psi}}$, $B_{\Delta\bar{\Psi}1}$, $B_{\Delta\bar{\Psi}2}$ and $C_{\Delta\bar{\Psi}}$ are empty matrices, and only constant matrices $D_{\Delta\bar{\Psi}1}=b$ and $D_{\Delta\bar{\Psi}2}=0$ need to be substituted into the synthesis condition (\ref{eq:25}) when solving this LMI. Using the proposed $\mathcal{H}_\infty$ state-feedback controller, we next examine the resulting $\mathcal{L}_2$-gain level from the disturbance input $d$ to the performance output $e$ from two perspectives.
\begin{table}[!t]
	\caption{Performance level $\gamma$ with regard to different IQCs-based control strategies as $\alpha$ increases}
	\label{Tab:1}
	\centering
	\begin{tabular}{c|cccccc}
		\hline\hline
		$\alpha$  & $2$ & $5$ & $7$ & $10$ & $15$ & $20$\\
		$\gamma_P$ & 3.041 & 3.041 & 3.041 & 3.040 & 3.041 & 3.040 \\
		$\gamma_{ZF}$ & 1.520 & 1.520 & 1.520 & 1.520 & 1.520 & 1.520 \\
		$\gamma_{S}$ & 8.155 & 8.153 & 8.154 & 8.155 & 8.154 & 8.156 \\
		$\gamma_{M}$  & 1.508 & 1.508 & 1.508 & 1.508 & 1.508 & 1.508 \\
		\hline
		$\alpha$ & $30$ & $40$ & $50$ & $60$ & 70 & $100$\\
		$\gamma_P$ & 3.040 & 3.041 & 3.041 & 3.040 & 3.041 & 3.040 \\
		$\gamma_{ZF}$ & 1.524 & 1.528 & 1.530 & 1.531 & 1.531 & 1.532 \\
		$\gamma_S$ & 8.155 & 8.155 & 8.155 & 8.154 & 8.155 & 8.155 \\
		$\gamma_{M}$ & 1.508 & 1.508 & 1.508 & 1.508 & 1.508 & 1.508 \\
		\hline
		\hline
	\end{tabular}
\end{table}
First, we investigate the influence of the inertial element $\frac{1}{s+\alpha}$ introduced in the dead-zone loop on the achievable $\mathcal{L}_2$-gain performance. As shown in Table \ref{Tab:1} and Fig. \ref{Fig:2}, the $\mathcal{L}_2$ gain $\gamma$ will maintain around 1.5086 when $\alpha$ increases from $2$ to $100$. Therefore, although the inertial term is introduced to render the Popov IQC proper, the additional pole associated with $\alpha$ has little influence on the achievable disturbance attenuation level.
\begin{figure}
	\centering
	\includegraphics[width=0.45\textwidth]{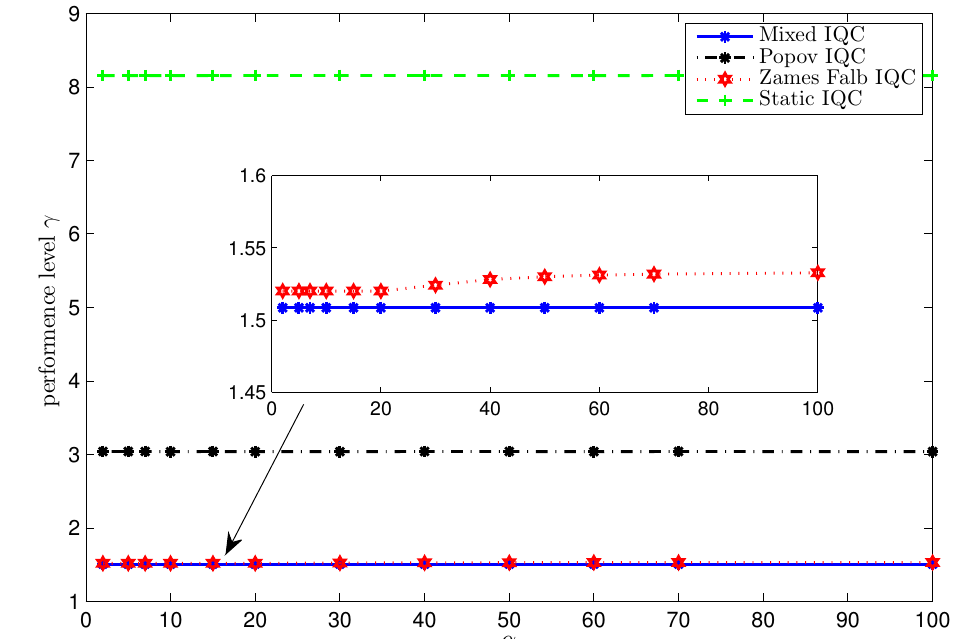}
	\vspace{-0.5cm}
	\caption{Performance level $\gamma$ using different IQC's strategies as $\alpha$ increases. \label{Fig:2}}
\end{figure}

Second, based on Lemma \ref{Lem:3}, we examine the improvement in robust $\mathcal{H}_\infty$ performance achieved by the proposed mixed-IQC-based state-feedback controller, and compare it with controller designs based on a single IQC and the standard sBRL in \cite{PP1995}. The value of $\alpha$ is initiated as $1$, and Popov, Zames-Falb and static IQC multipiers are picked individually. Other initial conditions do not change, then the $\mathcal{H}_\infty$ performance $\gamma$ with different control strategies are also exhibited in Table \ref{Tab:1}. In particular, the mixed-IQC-based $\gamma$-suboptimal $\mathcal{H}_\infty$ state-feedback controller \eqref{eq:15} yields a smaller performance level $\gamma$ than the corresponding single-IQC-based designs. Compared with single IQC, we have obtained the desired $\mathcal{L}_2$ gain using the mixed IQCs with scaling factors. This remarkable characteristic attributes to the introduction of scaling factors $\lambda_l$ for $l\in\textbf{I}[1,N_{n_\psi}]$ ($k=1$ and $l=3$ in this example). By regulating the weight of each IQC and solving the convex synthesis condition (\ref{eq:25}), we are able to obtain the optimised performance level $\gamma$ all the time. These results show that the proposed controller design strategy improves the closed-loop behavior and provides better attenuation of the external disturbance $d$.

%\begin{table}[!t]
%\caption{Performance level $\gamma$ with regard to different IQCs-based control strategies}
%\label{Tab:2}
%\centering
%\begin{tabular}{c|cccc}
%\hline\hline
%  & Popov & ZF & Static  & Mixed\\
%\hline
%$\gamma$  & 4.1407 & 4.2636 & 51.0084 & 4.0526 \\
%\hline
%\hline
%\end{tabular}
%\end{table}

%\begin{figure}
%\centering
%\includegraphics[width=0.5\textwidth]{Per_gamma.eps}
%%\vspace{-1cm}
%\caption{Performance level $\gamma$ with regard to different IQCs strategies. \label{Fig:3}}
%\end{figure}

Moreover, we usually obtain a pole of closed-loop system with large magnitude in the left half of complex plane after solving synthesis condition (\ref{eq:25}). As \cite{MP1996} proposed, one can place the poles in an LMI region $\mathcal{S}(\rho,\theta)$ of complex numbers $\xi+j\eta$ such that $\xi<-\rho<0$ and $\tan{\theta}\xi<-|\eta|$. The corresponding LMI conditions are described as:
\begin{align}
&A_{cl}X_D+X_DA_{cl}^{\rm T}+2\rho X_D<0 \label{eq:30} \\
&\left[\begin{matrix}\label{eq:31}
\sin{\theta}(A_{cl}X_D+X_DA_{cl}^{\rm T}) & \cos{\theta}(A_{cl}X_D-X_DA_{cl}^{\rm T}) \\
\cos{\theta}(A_{cl}X_D-X_DA_{cl}^{\rm T}) & \sin{\theta}(A_{cl}X_D+X_DA_{cl}^{\rm T})
\end{matrix}\right]<0.\nonumber \\
&\hspace*{-1.5ex} \hspace*{-1.5ex}&
\end{align}
Here, convexity is enforced by seeking a common solution $X_D=Q>0$ to (\ref{eq:25}), (\ref{eq:30}) and (\ref{eq:31}), and some conservatism could be introduced this way. Though the $\mathcal{H}_\infty$ performance in the terms of disturbance attenuation could be sacrificed, satisfactory transient response of linear systems can be ensured in the simulation.

To this end, assuming the external disturbance $d=0.5\sin(0.5t+\frac{\pi}{3})$ is activated from $t=10s$ to $20s$ and letting parameters be $\alpha=2$ and saturated upper bound be $\bar{u}=0.0003$, we apply the proposed mixed IQCs-based state-feedback controller to this second-order system. Moreover, LMIs (\ref{eq:30}) and (\ref{eq:31}) are incorporated to controller synthesis condition (\ref{eq:25}), and parameters $\rho$ and $\theta$ in the LMI region $\mathcal{S}(\rho,\theta)$ are chosen as $1.0$ and $\frac{\pi}{3}$, respectively. After synthesis, controller gains are calculated as $ F_{c} = \left[
0.0109 ~ 0.0041 ~ 0.0589 ~ 0.3229 ~ -0.0754 \right.$ $\left.
-0.1386~ -0.0488 ~ -0.0101\right]$ and $ H_c = 0.6536 $.
Scaling factors are $\lambda_1=60.9916$, $\lambda_2=77.1289$ and $\lambda_3=43.7049$. Simulation results are shown in Figs. \ref{Fig:3}-\ref{Fig:4}. Fig. \ref{Fig:3} shows the trajectories of system state, and it can be observed that state variables can reach the equilibrium point after $t=3s$. The poles of closed-loop systems are listed as $-6.4496$, $-1.5491$, $-1.6202$, $-2.3275$, $-2.0000$, $-2.0000$, $-2.0000$ and $-2.3597$. Fig. \ref{Fig:4} shows the saturation input curve, and input enters saturation zone when time $t$ goes through $1-2s$ and $15-16s$. Therefore, we conclude that with the implementation of the proposed $\mathcal{H}_\infty$ controller, even though external disturbance occurs, the closed-loop system maintains robust stability.
\begin{figure}[htbp]
	\centering
	\includegraphics[width=0.45\textwidth]{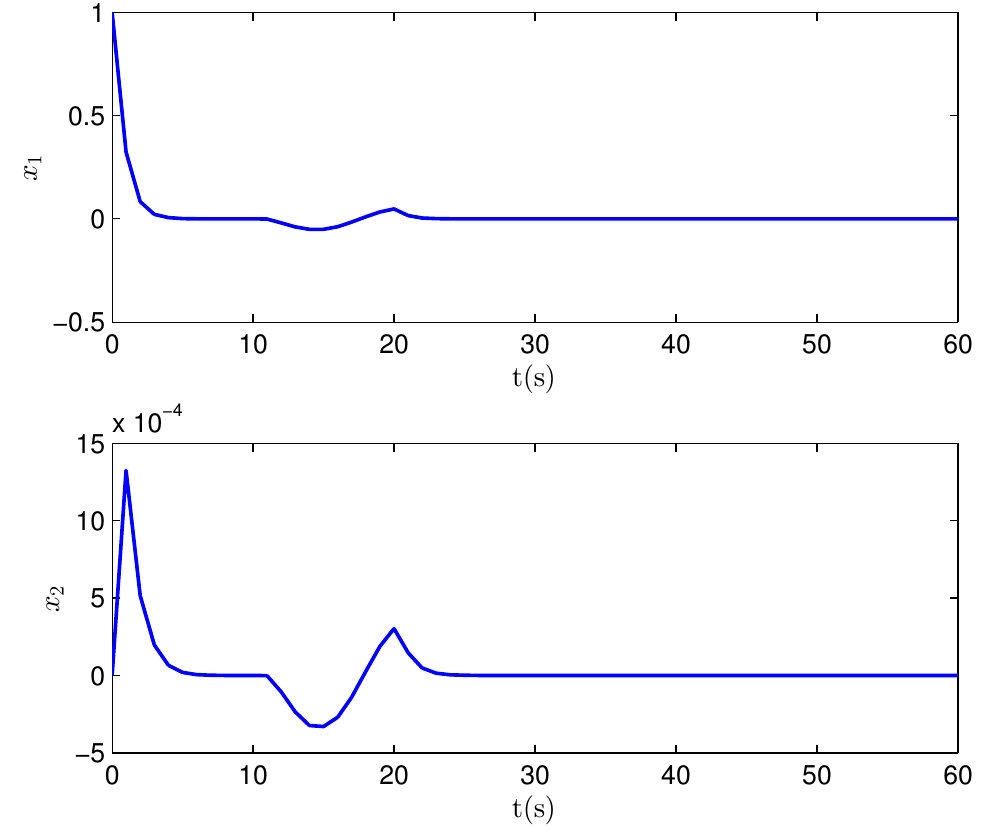}
	%\vspace{-1cm}
	\caption{State trajectory with the implementation of mixed IQC-based $\mathcal{H}_\infty$ controller. \label{Fig:3}}
\end{figure}
\begin{figure}[htbp]
	\centering
	\includegraphics[width=0.45\textwidth]{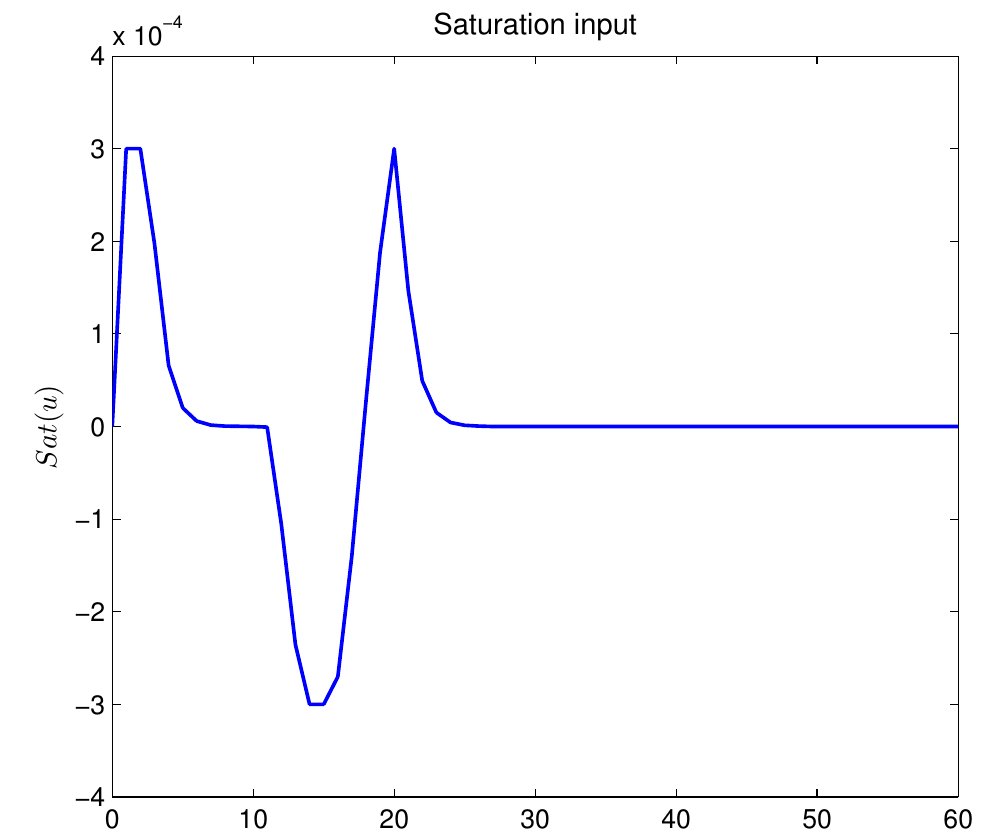}
	%\vspace{-1cm}
	\caption{Input saturation with the implementation of mixed IQC-based $\mathcal{H}_\infty$ controller. \label{Fig:4}}
\end{figure}

\subsection{Cart-spring pendulum system}
\begin{figure}[htbp]
	\centering
	\includegraphics[width=0.4\textwidth]{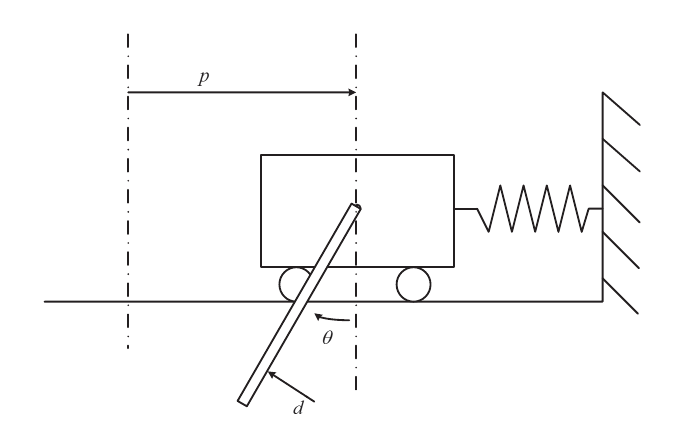}
	%\vspace{-1cm}
	\caption{Damped mass-spring-pendulum system. \label{Fig:5}}
\end{figure}
To further demonstrate the advantage of the dynamic-IQC-based $\mathcal{H}_\infty$ controller over the static sector-bound approach, we next consider a cart--spring--pendulum system \cite{GJIA2003}. From Fig. \ref{Fig:5}, $p$ is the position of this cart, $\theta$ is the angular position of the pendulum, $u$ is the voltage applied to the motor and $d$ is a disturbance force. Let system be $x_1=p$,$x_2=\dot{p}$, $x_3=\theta$ and $x_4=\dot{\theta}$. After linearization around the origin, its state-space equation is given as:
\begin{eqnarray}\label{eq:36}
\dot{x}&\hspace*{-1.5ex}=\hspace*{-1.5ex}& A
x +B_{12}d +B_{10}Sat(u)
\end{eqnarray}
with
\begin{align}
A &= \left[\begin{matrix}
0 & 1 & 0 & 0 \\
-330.46 & -12.15 & -2.44 & 0 \\
0 & 0 & 0 & 1\\
-812.61 & -29.87 & -30.1 & 0
\end{matrix}\right] \nonumber \\
B_0 &= \left[\begin{matrix}
0 \\
2.71762 \\
0 \\
6.68268
\end{matrix}\right], B_2 = \left[\begin{matrix}
0 \\
0 \\
0 \\
15.61
\end{matrix}\right].
\end{align}
We choose the matrices related to the performance output $e$ as follows: $C_{1}=\left[\begin{matrix}
0 & 0 & 1 & 0
\end{matrix}\right]$, $D_{10}=0$ and $D_{12}=0$.

We then analyze the disturbance attenuation performance from the external disturbance to the error output under the proposed dynamic-IQC-based $\mathcal{H}_\infty$ controller. Note that the modified Zames-Falb IQC in (\ref{eq:9}) is selected while solving LMI (\ref{eq:25}) and making simulation, especially this IQC with scaling factor is considered, respectively. We assume that the value of $\alpha$ is set to $1$. When the dynamic IQC with a scaling factor is used in \eqref{eq:25} to solve the optimization problem based on Theorem \ref{Thm:1}, the resulting $\mathcal{L}_2$-gain level is $\gamma_{dyn}=3.022$.

For comparison, the synthesis inequality with the implementation of anti-windup control approach is given by $\dot{V}+\frac{1}{\gamma}e^{\rm T}e-\gamma d^{\rm T}d+\omega^{\rm T}\Lambda(u-\omega)<0$ where $\omega=\mathcal{N}(u)$ is a dead-zone function and $\Lambda$ is a diagonal positive-definite matrix. Through simple transformation by substituting the state-feedback controller $u=F_cx+H_c\omega$, and its controller synthesis condition in the LMI form is written as
\begin{align*}
	\left[\begin{matrix}
		He\left\{A_pQ+B_0\hat{F}_c\right\} & \star \\
		-\Gamma^{\rm T}B_0^{\rm T}+\hat{H}_c^{\rm T}B_0^{\rm T}+\hat{F}_c & \hat{H}_c+\hat{H}_c^{\rm T}-2\Gamma \\
		B_2^{\rm T} & 0 \\
		C_1Q+D_{10}\hat{F}_c & -D_{10}\Gamma+D_{10}\hat{H}_c \\
	\end{matrix}\right. \nonumber \\
	\left.\begin{matrix}
		 \\
		\star & \star \\
		-\gamma I & \star \\
		D_{12} & -\gamma I
	\end{matrix}
	\right]<0
\end{align*}
where $\Gamma = \Lambda^{-1}$, $\hat{F}_c= F_cQ$ and $\hat{H}_c=H_c\Gamma$. By solving the above LMI under the same system parameters, we obtain the suboptimal performance level $\gamma_{sc}=181.142$, which is much larger than $\gamma_{dyn}$. Therefore, the disturbance attenuation performance of the input-saturated LTI system is significantly improved by the proposed dynamic-IQC-based robust $\mathcal{H}_\infty$ controller.

\begin{figure}[htbp]
	\centering
	\includegraphics[width=0.45\textwidth]{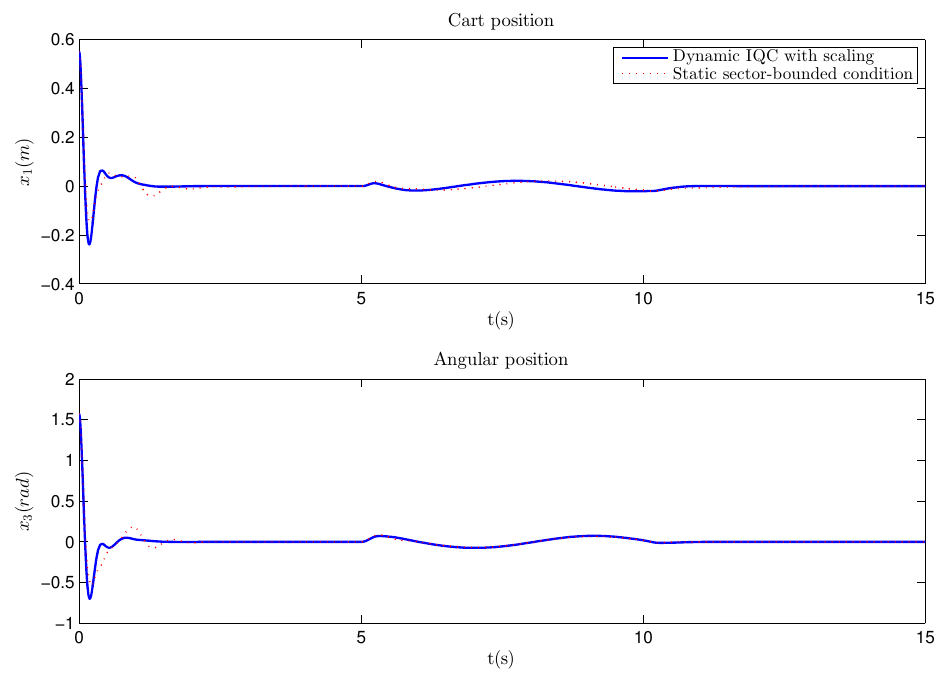}
	%\vspace{-1cm}
	\caption{Trajectories of the cart-spring pendulum. Dynamic IQC with scaling factor (bold solid line); static sector-bound condition (thin dotted line).  \label{Fig:6}}
\end{figure}
\begin{figure}[htbp]
	\centering
	\includegraphics[width=0.45\textwidth]{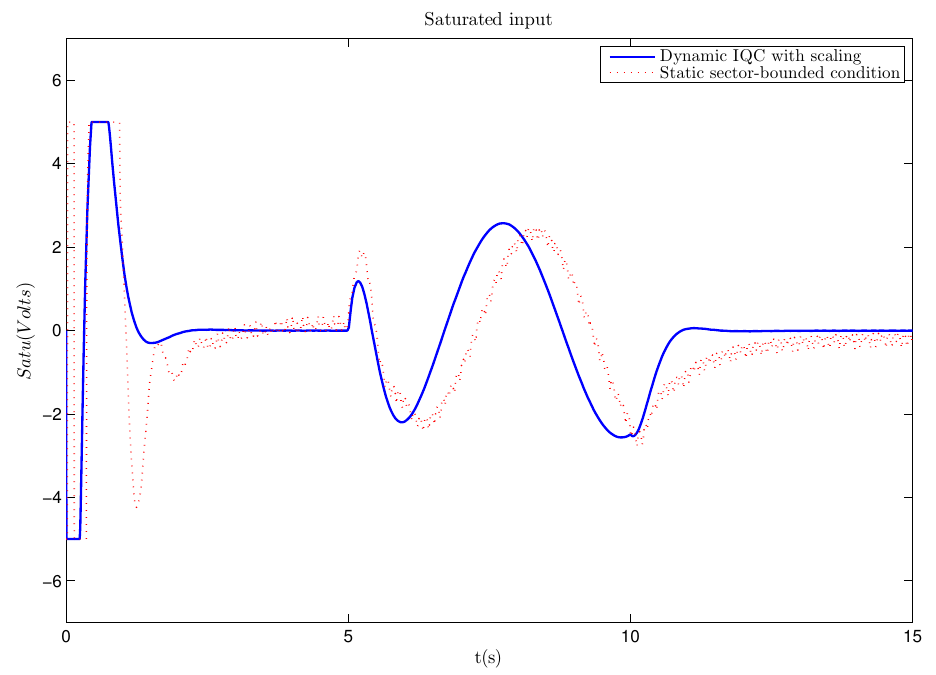}
	%\vspace{-1cm}
	\caption{Saturated voltage of the cart-spring pendulum. Dynamic IQC with scaling factor (bold solid line); static sector-bound condition (thin dotted line). \label{Fig:7}}
\end{figure}

After clustering the poles in the region $\mathrm{Re}(s)<-1$ and imposing an upper bound on the matrix variable $Q$ to avoid numerical singularity, the closed-loop poles are obtained as $-96.5074,~-6.5045\pm17.0188i,-3.0656\pm3.0384i,-1.2863,-1.0000,-1.3579$. In this case, controller synthesis condition (\ref{eq:25}) is solved and a set of dynamic IQC-based state-feedback controller parameters are derived by $F_c=[3658.4~ -2837.8 ~-2404.7~ 1056.0~ -52.5~ -20.6~ -37.2~ -9.1]$ and $H_c = 8495.4$. We assume that the initial conditions are given by $x_0=[0.55m~0~\frac{\pi}{2}rad~0]^{\rm T}$, and the upper bound $\bar{u}$ of actuator output is $5Volts$. Simulation results are shown in Figs. \ref{Fig:6}-\ref{Fig:7}. It can be seen from Fig. \ref{Fig:6} that the pendulum initially undergoes a large swing, during which the cart moves rapidly to recover balance.

\section{Conclusion}

This paper has proposed an $\mathcal{H}_\infty$ state-feedback control method for linear systems subject to time-varying structured uncertainties and input saturation within an IQC-based dissipation framework. By introducing a loop transformation, the original uncertain LFT system is reformulated as a multilayer interconnection involving both uncertainty and nonlinearity blocks. Based on mixed IQC characterizations of the time-varying uncertainties and saturation-induced dead-zone nonlinearity, a new scaled bounded real lemma has been established, from which tractable controller synthesis conditions are derived. In particular, the proposed method achieves improved disturbance attenuation performance in terms of the $\mathcal{L}_2$ gain by solving the resulting synthesis conditions. Two numerical examples, including an uncertain LFT system and a cart--spring--pendulum system, have been presented to validate the effectiveness of the proposed approach. The results show that the proposed controller outperforms both single-IQC-based designs and the static sector-bound method in terms of disturbance attenuation.

Future work will focus on extending the mixed-IQC-based $\mathcal{H}_\infty$ state-feedback design to unstable systems with regional stability requirements. Another direction is to develop alternatives to the current inequality relaxation so as to reduce conservatism and further improve the controller synthesis conditions.

\appendices
\section{$J$-Spectral Factorization}

Following the procedures in \cite{P2015,HP2014}, we use the Popov IQC multiplier as an illustrative example to present the $J$-spectral factorization procedure.

The numerical procedure to construct a $J$-spectral factorization for a para-Hermitian multiplier $\Pi=\Pi^{\sim}$ can be summarized as follows. We first obtain a minimal state-space realization of $\Pi_{\bar{P}}$. The modified Popov IQC \eqref{eq:7} is given by
\begin{align*}
\Pi_{\bar{P}} &= \left[\begin{matrix}
\frac{-0.01\beta^{2}}{(s-\alpha)(s+\alpha)} & \frac{\alpha}{s-\alpha} \\
-\frac{\alpha}{s+\alpha} & 0
\end{matrix}\right] +\left[\begin{matrix}
0 & 1 \\
1 & -0.01
\end{matrix}\right] = \hat{\Pi}_P + D
\end{align*}
where we denote $G(s)=\left[\begin{matrix}
g_1(s) & g_2(s)
\end{matrix}\right]$, with
\begin{align*}
g_1(s) &= \frac{1}{(s-\alpha)(s+\alpha)}\left[\begin{matrix}
-0.01\beta^{2} \\
-\alpha(s-\alpha)
\end{matrix}\right], \nonumber \\
g_2(s) &= \frac{1}{s-\alpha}\left[\begin{matrix}
\alpha \\
0
\end{matrix}\right], \nonumber \\
d_1(s) &= s^{2}-\alpha^{2}, \nonumber \\
d_2(s) &= s-\alpha .
\end{align*}

A controllable canonical-form realization can be written as
\begin{align*}
A_1 &= \left[\begin{matrix}
0 & 1 \\
\alpha^{2} & 0
\end{matrix}\right],\quad
B_{1} = \left[\begin{matrix}
0 \\
1
\end{matrix}\right],\quad
C_1 = \left[\begin{matrix}
-0.01\beta^{2} & 0 \\
\alpha^{2} & -\alpha
\end{matrix}\right], \nonumber \\
A_2 &= \alpha,\quad
B_2 = 1,\quad
C_2= \left[\begin{matrix}
\alpha \\
0
\end{matrix}\right].
\end{align*}
After augmentation, we obtain
\begin{align*}
A &= \left[\begin{matrix}
A_1 & 0 \\
0 & A_2
\end{matrix}\right]
=\left[\begin{array}{cc:c}
0 & 1 & 0 \\
\alpha^{2} & 0 & 0\\
\hdashline
0 & 0 & \alpha
\end{array}\right], \nonumber \\
B &= \left[\begin{matrix}
B_1 & 0 \\
0 & B_2
\end{matrix}\right]
= \left[\begin{array}{cc}
0 & 0 \\
1 & 0 \\
\hdashline
0 & 1
\end{array}\right], \nonumber \\
C &= \left[\begin{matrix}
C_1 & C_2
\end{matrix}\right]
= \left[\begin{array}{cc:c}
-0.01\beta^{2} & 0 & \alpha \\
\alpha^{2} & -\alpha & 0
\end{array}\right].
\end{align*}

To obtain a minimal realization, we examine the observability of $(A,C)$. For $\alpha=1$ in $\frac{1}{s+\alpha}$, the observability matrix is rank-deficient (specifically, $\mathrm{rank}(\mathcal{O})=2<3$). Selecting the first two rows of $\left[\begin{matrix}
C \\
AC
\end{matrix}\right]$ yields
\begin{align*}
S = \left[\begin{matrix}
-0.01 & 0 & 1 \\
1 & -1 & 0
\end{matrix}\right].
\end{align*}
We then construct
\begin{align*}
U = \left[\begin{matrix}
0 & 1 \\
0 & 0 \\
1 & 0.01
\end{matrix}\right],
\end{align*}
which satisfies $SU=I_{2\times 2}$. Therefore, a minimal realization of $\Pi_{\bar{P}}$ is given by
\begin{align*}
A_0 &= SAU = \left[\begin{matrix}
1 & 0.01 \\
0 & -1
\end{matrix}\right], \quad 
B_0 = SB = \left[\begin{matrix}
0 & 1 \\
-1 & 0
\end{matrix}\right], \nonumber \\
C_0 &= CU = \left[\begin{matrix}
1 & 0 \\
0 & 1
\end{matrix}\right], \quad
D_0 = D = \left[\begin{matrix}
0 & 1 \\
1 & -1
\end{matrix}\right].
\end{align*}

Second, we compute a state-space realization $(A_s,B_s,C_s,D_s)$ for the stable part of $\Pi_{\bar{P}}$:
\begin{align*}
A_s &= -1, \quad
B_s = \left[\begin{matrix}
-1 & 0
\end{matrix}\right], \nonumber \\
C_s &= \left[\begin{matrix}
-0.32 \\
1
\end{matrix}\right], \quad
D_s = \left[\begin{matrix}
0 & 1 \\
1 & -0.01
\end{matrix}\right].
\end{align*}
Since $D_s$ is nonsingular, there exists a symmetric solution $X=X^{\rm T}$ to the algebraic Riccati equation (ARE)
\begin{align*}
A_s^{\rm T}X+XA_s-(XB_s+C_s^{\rm T})D_s^{-1}(B_s^{\rm T}X+C_s) = 0.
\end{align*}
For the above data, one obtains
\begin{align*}
X = 0.9950,\qquad L = -0.01,\qquad G = \left[\begin{matrix}
0.99 \\
-1
\end{matrix}\right].
\end{align*}
Let $M$ satisfy $D_s = M^{\rm T}WM$ with $W=\left[\begin{matrix}
1 & 0 \\
0 & -1
\end{matrix}\right]$. One feasible choice is
\begin{align*}
M = \left[\begin{matrix}
1 & 0.4950 \\
1 & -0.5050
\end{matrix}\right].
\end{align*}
Based on $(A_s, B_s, WM^{-{\rm T}}(B_s^{\rm T}X+C_s), M)$, a state-space realization of $\Psi$ is obtained as
\begin{align*}
\Psi_{\bar{P}} = \left[\begin{matrix}
\frac{s+0.5050}{s+1} & 0.4950 \\
\frac{s-0.4950}{s+1} & -0.5050
\end{matrix}\right].
\end{align*}

For the specific form required in the $\mathcal{H}_\infty$ controller synthesis, the general factor $\Psi_{\bar{P}}$ is further converted into $\bar{\Psi}_{\bar{P}}$ (see Section \uppercase\expandafter{\romannumeral3}). The resulting factor is
\begin{align*}
\bar{\Psi}_{\bar{P}} = \left[\begin{matrix}
\frac{1.98s+0.0198}{s+1} & -0.9802 \\
0 & 1
\end{matrix}\right].
\end{align*}

\begin{Rem}\label{Rem:2}
	An ARE is required to compute the $J$-spectral factorization. For the modified Zames--Falb IQC, the matrix $R$ in the ARE may become singular. In that case, a small perturbation $\epsilon$ can be added to the $(1,1)$ entry of $R$ to ensure the existence of a solution.
\end{Rem}

%\begin{IEEEbiography}[{\includegraphics[width=1in,height=1.25in,clip,keepaspectratio]{a1.png}}]{First A. Author} (Fellow, IEEE) and all authors may include 
%biographies. Biographies are
%often not included in conference-related papers.
%This author is an IEEE Fellow. The first paragraph
%may contain a place and/or date of birth (list
%place, then date). Next, the author’s educational
%background is listed. The degrees should be listed
%with type of degree in what field, which institution,
%city, state, and country, and year the degree was
%earned. The author’s major field of study should
%be lower-cased.
%
%The second paragraph uses the pronoun of the person (he or she) and
%not the author’s last name. It lists military and work experience, including
%summer and fellowship jobs. Job titles are capitalized. The current job must
%have a location; previous positions may be listed without one. Information
%concerning previous publications may be included. Try not to list more than
%three books or published articles. The format for listing publishers of a book
%within the biography is: title of book (publisher name, year) similar to a
%reference. Current and previous research interests end the paragraph.
%
%The third paragraph begins with the author’s title and last name (e.g.,
%Dr. Smith, Prof. Jones, Mr. Kajor, Ms. Hunter). List any memberships in
%professional societies other than the IEEE. Finally, list any awards and work
%for IEEE committees and publications. If a photograph is provided, it should
%be of good quality, and professional-looking.
%\end{IEEEbiography}

\begin{IEEEbiography}[{\includegraphics[width=1in,height=1.25in,clip,keepaspectratio]{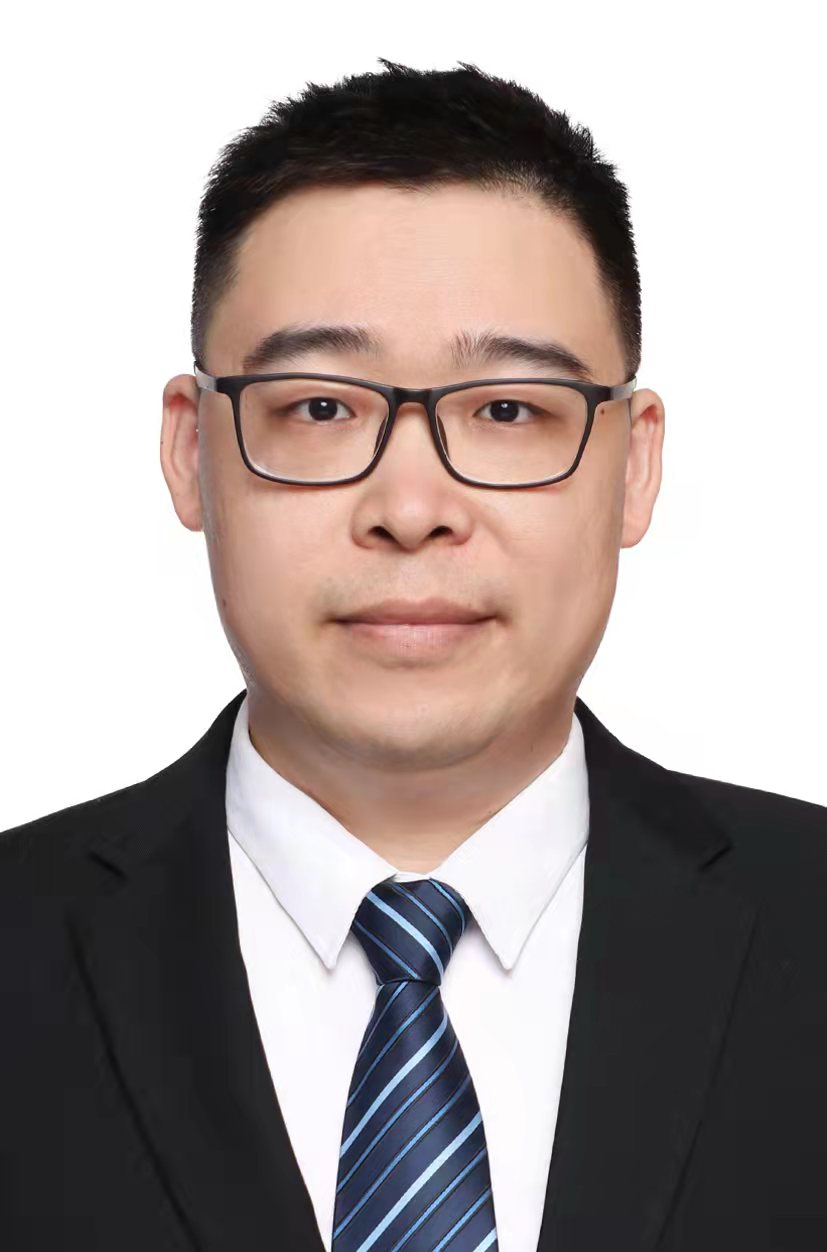}}]{Xu Zhang} is a Ph.D. candidate in the School of Electrical Engineering and Computer Science at the Pennsylvania State University.
	He received the B.S. degree in Automation and the M.S. degree in Control Science and Engineering from Harbin University of Science and Technology in 2008 and Harbin Institute of Technology in 2010, respectively.
	From February 2017 to February 2018, he was a visiting scholar in the School of Mechanical and Aerospace Engineering at North Carolina State University.
	His research interests include dynamic IQC, nonlinear control theory, and the security of cyber-physical systems and machine learning.
\end{IEEEbiography}

% if you will not have a photo at all:
\begin{IEEEbiography}[{\includegraphics[width=1in,height=1.25in,clip,keepaspectratio]{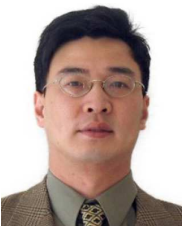}}]{Fen Wu (Senior Member, IEEE)} received the B.S. and M.S. degrees in automatic control
from Beijing University of Aeronautics and Astronautics, Beijing, China, in 1985 and 1988, respectively, and the Ph.D. degree in mechanical engineering from the University of 	California at Berkeley, Berkeley, CA, USA, in 1995. From 1995 to 1997, he was a Research Associate with the Centre for Process Systems Engineering, Imperial College of Science, Technology and Medicine, London, U.K. He was a Staff Engineer with Dynacs Engineering Company Inc., Melbourne, FL, USA, from 1997 to 1999. He is currently a Professor with the Department of Mechanical and Aerospace Engineering, North Carolina State University, Raleigh, NC, USA. His research interests include linear parameter-varying and switching control of nonlinear systems, robust control subject to physical constraints, nonlinear control using sum-of-square programming, distributed
control of multiagent systems, and the application of advanced control and optimization techniques to aerospace, mechanical, and chemical engineering problems. He has authored over 200 conference papers and journal articles. Dr. Wu was an Associate Editor of Journal of Dynamic Systems, Measurement and Control (ASME), IEEE Transactions on Automatic Control, and Automatica. He is currently serving as an Associate Editor for IEEE Transactions on Cybernetics. He is a Fellow of ASME.
	
\end{IEEEbiography}

%\begin{IEEEbiographynophoto}{Second B. Author,} photograph and biography not available at the
%time of publication.
%\end{IEEEbiographynophoto}
%
%\begin{IEEEbiographynophoto}{Third C. Author Jr.} (Member, IEEE), photograph and biography not available at the
%time of publication.
%\end{IEEEbiographynophoto}

\end{document}